\DeclareMathOperator*{\argmin}{arg\,min}
\newtheorem{assumption}{Assumption}
\begin{document}

\title{Efficient Last-Iterate Convergence in Regret Minimization via Adaptive Reward Transformation}

\author{
    \name Hang Ren$^{1}$ \email renhang@stu.hit.edu.cn \\
    \name Yulin Wu$^{1,2}$ \email yulinwu@cs.hitsz.edu.cn \\
    \name Shuhan Qi$^{1,2}$ \email shuhanqi@cs.hitsz.edu.cn \\
    \name Jiajia Zhang $^{1,2}$ \email zhangjiajia@hit.edu.cn \\
    \name Xiaozhen Sun$^{1}$ \email sxzlsk515@outlook.com \\
    \name Tianzi Ma$^{1}$ \email tianzima@stu.hit.edu.cn \\
    \name Xuan Wang$^{1,2}$\thanks{Corresponding author.} \email wangxuan@cs.hitsz.edu.cn \\
    \addr $^1$School of Computer Science and Technology, Harbin Institute of Technology (Shenzhen), China \\
    \addr $^2$Guangdong Provincial Key Laboratory of Novel Security Intelligence Technologies
}

\editor{Tor Lattimore}
\maketitle
\begin{abstract}
Regret minimization is a powerful method for finding Nash equilibria in Normal-Form Games (NFGs) and Extensive-Form Games (EFGs), but it typically guarantees convergence only for the average strategy. However, computing the average strategy requires significant computational resources or introduces additional errors, limiting its practical applicability. The Reward Transformation (RT) framework was introduced to regret minimization to achieve last-iterate convergence through reward function regularization. However, it faces practical challenges: its performance is highly sensitive to manually tuned parameters, which often deviate from theoretical convergence conditions, leading to slow convergence, oscillations, or stagnation in local optima.

Inspired by previous work, we propose an adaptive technique to address these issues, ensuring better consistency between theoretical guarantees and practical performance for RT Regret Matching (RTRM), RT Counterfactual Regret Minimization (RTCFR), and their variants in solving NFGs and EFGs more effectively. Our adaptive methods dynamically adjust parameters, balancing exploration and exploitation while improving regret accumulation, ultimately enhancing asymptotic last-iterate convergence and achieving linear convergence. Experimental results demonstrate that our methods significantly accelerate convergence, outperforming state-of-the-art algorithms.
\end{abstract}

\begin{keywords}
  game theory algorithm, Nash equilibrium, regret minimization, last-iterate convergence, reward transformation 
\end{keywords}

\section{Introduction}
Game theory provides essential frameworks for solving real-world problems, with Normal-form Games (NFGs) and Extensive-form Games (EFGs) serving as key models for studying strategic interactions, particularly in scenarios involving imperfect information and sequential decisions. A primary objective in game theory is to find or approximate Nash equilibria. Over time, numerous algorithms have been proposed, but the most efficient approaches are based on no-regret learning, valued for their simplicity, parameter-free nature, and effectiveness. This foundation underlies Regret Matching (RM) \citep{hart2000simple} in NFGs, as well as its extensive-form counterparts—Counterfactual Regret Minimization (CFR) \citep{zinkevich2007regret} and CFR+ \citep{tammelin2014solving}—which leverage these properties to approximate Nash equilibria via the average strategy.

Despite the advantages of average-strategy convergence, it requires considerable computational resources and introduces additional errors when approximating the average strategy, limiting its applicability in large-scale games. For instance, players may be unable to recall all past decisions, and deep learning-based methods struggle to effectively average historical strategies or parameters, complicating the integration of neural networks in these solutions. This limitation has led to the use of additional neural networks in Deep CFR \citep{brown2019deep} to train the average strategy, which in turn introduces further errors.

These limitations strongly motivate the development of regret minimization algorithms that achieve convergence using only the last strategy—a concept known as last-iterate convergence. Ideally, last-iterate convergence should be faster than that of the average strategy, as averaging inherently incorporates suboptimal strategies, which can hinder efficient convergence to an optimal solution. However, common methods such as regret matching, regret matching+, and Hedge often fail to achieve this property both theoretically and empirically, even in normal-form games \citep{farina2024regret}. Although empirical studies have shown that CFR can achieve better performance with its last strategy compared to its average \citep{lockhart2019computing}, and \citet{cai2023last} observed rapid last-iterate convergence in certain cases with extragradient RM+ and predictive RM+, a comprehensive theoretical framework explaining these phenomena in EFGs is still lacking.

Recent studies have demonstrated that Optimistic Gradient Descent Ascent (OGDA), Optimistic Multiplicative Weights Update (OMWU), and regularized counterfactual regret minimization (Reg-CFR) achieve last-iterate convergence in two-player zero-sum games, both in normal and extensive forms \citep{wei2020linear, lee2021last, liu2022power}. However, these methods face significant limitations, including the need for careful hyper-parameter tuning and complex gradient calculations, which can hinder their practical effectiveness in large-scale games, with some even failing to converge reliably.

An alternative approach, inspired by evolutionary game theory, is the Reward Transformation (RT) framework \citep{bauer2019stabilization, perolat2021poincare, perolat2022mastering}. This method, when applied to Follow the Regularized Leader (FTRL) and Multiplicative Weights Update (MWU) in continuous-time feedback games, has shown results comparable to or better than those of OGDA and OMWU \citep{abe2022mutation, abe2022last}, although its performance in real-world applications remains suboptimal. Recently, the RT framework has been adapted to RM+ and CFR+ for solving discrete-time feedback games, achieving last-iterate convergence in complex games for the first time \citep{meng2023efficient}. However, it has also encountered several challenges: the introduction of RT-term parameters, which are usually manually tuned and set to be fixed in practice, deviates from convergence theory, and performance is highly sensitive to these parameters, often leading to slow convergence, oscillations, or stagnation at local optima.

In this work, inspired by the work of \citet{perolat2021poincare} and \citet{meng2023efficient}, we propose adaptive RT regularization techniques for regret minimization algorithms, including RM, CFR, and their variants, to achieve efficient last-iterate convergence for finding Nash equilibrium in two-player zero-sum games. Our main contributions are as follows:

\begin{itemize}
    \item We extend the convergence analysis of the RT framework beyond \citet{meng2023efficient}'s RTRM+ to general RTRMs algorithms, proving a best-iterate rate of \(O(1/\sqrt{T})\) and asymptotic last-iterate convergence. We identify unresolved bottlenecks in \citet{meng2023efficient}, including reference strategy selection and fixed RT-term and regret weights, which cause instability and hinder convergence, thus improving the framework's robustness.
    
    \item We propose an adaptive technique to address these bottlenecks from three aspects: (1) dynamically selecting reference strategies, (2) adjusting RT weights to either accelerate convergence or explore new optimization directions, and (3) improving the regret accumulation process by introducing adaptive discounted regret weighting \citep{brown2019solving}, which better aligns with achieving last-iterate convergence.

    \item We apply the adaptive RT framework to EFGs using laminar regret decomposition and prove its asymptotic last-iterate convergence. This addresses the technical challenges in guaranteeing the convergence of RTCFR+ as established in~\citet{meng2023efficient}.

    \item We evaluate our adaptive methods against state-of-the-art average-iterate and last-iterate convergence algorithms in both NFGs and EFGs, demonstrating significant performance improvements.
\end{itemize}

The structure of the paper is as follows: In Section 2, we discuss related work on equilibrium-finding algorithms, average-iterate convergence, and last-iterate convergence algorithms. In Section 3, we introduce preliminaries on NFGs and EFGs, the details of regret minimization, and the RT framework. Section 4 presents the convergence analysis of the RT framework and identifies its bottlenecks, which we address with our proposed adaptive method in Section 5, including parameter selection and RM structure optimization for consistency with last-iterate convergence. In Section 6, we extend our work to CFRs for solving EFGs with imperfect information, followed by experimental evaluation in Section 7. Finally, we conclude this paper and discuss future work in Section 8.

\section{Related Work}

In this section, we focus on two-player zero-sum games with perfect recall, commonly known as saddle-point optimization problems. The concept of Nash equilibrium \citep{nash1950equilibrium, nash1950non} and approximate Nash equilibrium has driven the development of numerous algorithms aimed at finding the optimal strategy profile, also referred to as the fixed point in saddle-point optimization. Notable methods include linear programming \citep{von1996efficient, koller1994fast}, first-order methods \citep{nesterov2005excessive, hoda2010smoothing, kroer2015faster, kroer2017theoretical, kroer2018solving}, fictitious play \citep{brown1951iterative, heinrich2015fictitious, heinrich2016deep}, the double oracle approach \citep{mcmahan2003planning}, convex optimization techniques \citep{kalai2005efficient, hazan2016introduction}, and Counterfactual Regret Minimization (CFR) \citep{zinkevich2007regret, bowling2015heads, lanctot2009monte}. Since strategies in these games can be represented in both behavior and sequence-forms \citep{von1996efficient}, many of these algorithms are interconnected \citep{waugh2015unified, farina2021faster, liu2022equivalence}, leading to significant advancements in game-theoretic strategies.

\subsection{Average-Iterate Convergence in Saddle-Point Optimization} 
Average-iterate convergence algorithms, such as Mirror Descent with Euclidean function regularization in Gradient Descent Ascent (GDA) and entropy function regularization in Multiplicative Weights Update (MWU), along with Follow-the-Regularized-Leader (FTRL) and CFR \citep{zinkevich2007regret}, achieve a convergence rate of \(O(1/\sqrt{T})\), where \(T\) represents the number of iterations. These algorithms rely on the average strategy, necessitating the computation and storage of the average strategy at each step, which increases both computational and memory demands.

Moreover, historical strategies can significantly impact outcomes, especially when earlier strategies diverge substantially from the optimal or contain errors—an issue often encountered in algorithms developed from scratch. Such reliance on historical strategies extends the influence of earlier, potentially suboptimal decisions, which can complicate convergence to the optimal strategy.

\subsection{Last-Iterate Convergence in Saddle-Point Optimization} 
Recent advancements in Optimistic Gradient Descent Ascent (OGDA) and Optimistic Multiplicative Weights Update (OMWU) have shown potential for achieving last-iterate convergence in saddle-point optimization. These methods have been effective in both NFGs \citep{wei2020linear} and EFGs \citep{lee2021last}. However, GDA typically incurs significant computational overhead due to gradient calculations and projections, while MWU offers a closed-form solution with reduced computational demands but requires the assumption of a unique saddle point, which can be restrictive.

To address these challenges, recent studies \citep{anagnostides2022last, liu2022power} have modified regularization techniques to eliminate the need for the unique saddle point assumption, thereby achieving last-iterate convergence. However, the theoretical hyperparameters required for convergence are very small, leading to slow convergence in practice. Increasing these hyperparameters empirically often results in poorer performance, even in specific applications such as Leduc poker with three ranks \citep{wei2020linear}.

In evolutionary game theory, reward transformation techniques have been applied to FTRL and MWU methods \citep{perolat2021poincare, perolat2022mastering, abe2022mutation, abe2022last} for solving continuous-time feedback games. These methods do not rely on the assumption of a unique saddle point. Instead, they introduce a ``mutant" term in the reward function, which accounts for the difference between the reference strategy and the last strategy. This adjustment helps to make the saddle point an attractor of the dynamics, guiding strategy updates toward a proximate Nash equilibrium. \citet{meng2023efficient} extended this work by applying it to RM+ and CFR+, transitioning from continuous to discrete game problems. This body of work forms the basis of the Reward Transformation (RT) framework, where convergence guarantees are established by constructing a Lyapunov function related to the distance between the saddle point and the last strategy. This ensures asymptotic last-iterate convergence, although the specific convergence rate remains undefined.

We have identified that the main bottleneck of the RT framework is the convergence rate and stability, which is highly sensitive to parameters such as the reference strategy and the RT weight. In the worst case, the algorithm may fail to converge in practice. This motivates further study into how adaptive parameter selection during iterations can overcome this bottleneck and improve both the theoretical and practical performance of the RT framework.

\section{Preliminaries}
\subsection{Basic Notation} 
The standard inner product of vectors \(x\) and \(y\) is denoted by \(\langle x, y \rangle\). For a vector \(x \in \mathbb{R}^n\), we define its \(l_p\)-norm as \(\|x\|_p := \left( \sum_{i=1}^n |x_i|^p \right)^{1/p}\) for \(p \in [1, \infty)\). To measure the difference between two vectors, we use the Bregman distance \(D_\psi(x, y)\), which is defined using a distance-generating function \(\psi(\cdot)\):
\[
D_\psi(x, y) = \psi(x) - \psi(y) - \nabla \psi(y) \cdot (x - y).
\]
In this paper, we set \(\psi(x) = \frac{1}{2}\|x\|_2^2\), resulting in the specific form \(D_\psi(x, y) = \frac{1}{2}\|x - y\|_2^2\). We omit the notation if not explicitly specified.

\subsection{Normal-Form Game and Extensive-Form Game}

A \emph{Normal-Form Game (NFG)} is defined as a tuple \(G = (\mathcal{X}_1, \mathcal{X}_2, u)\), where \(\mathcal{X}_1 \subseteq \mathbb{R}^n\) and \(\mathcal{X}_2 \subseteq \mathbb{R}^m\) represent the convex and compact action spaces for player 1 and 2, respectively. The utility function \(u_i: \mathcal{X}_1 \times \mathcal{X}_2 \rightarrow \mathbb{R}\) is a biaffine mapping that assigns payoffs to player \(i\) based on the action pair \((x_1, x_2) \in \mathcal{X}_1 \times \mathcal{X}_2\). In zero-sum games, it holds that \(u_2(x_1, x_2) = -u_1(x_1, x_2)\).

An \emph{Extensive-Form Game (EFG)} is defined as a tuple \(G = (H, Z, A, P, \mathcal{I}_i, \sigma_i, u_i)\). In this representation, \(H\) is the set of states, including the initial state \(\emptyset\), while \(Z \subseteq H\) represents the set of terminal states or leaf nodes. The actions available at a non-terminal state \(h \in H \setminus Z\) are denoted by \(A(h)\). The player function \(P: H \to \{0, 1, 2\}\) maps each non-terminal history \(h \in H \setminus Z\) to the player who is to move at that state; if \(P(h) = 0\), the player is the ``chance" player.

The set \(\mathcal{I}_i\) represents the information partition for player \(i \in \{1, 2\}\), where each information set \(I \in \mathcal{I}_i\) consists of states \(h \in H\) such that \(P(h) = i\), which are indistinguishable to player \(i\). For any \(h, h' \in I\), it holds that \(A(h) = A(h')\). For convenience, we denote \(A(I)\) to mean \(A(h)\) for any \(h \in I\).

The probability of each action \(a\) at information set \(I\), controlled by player \(i\) (where \(P(I) = i\)), is given by \(\sigma_i: \mathcal{I} \times A \to [0, 1]\). This function follows  standard simplex property, defined as \( \sigma_i(I) \in \Delta^{|A(I)|} := \{ x \in \mathbb{R}^{|A(I)|} : x \geq 0, \sum_{a \in A(I)} x(a) = 1 \} \), which we refer to as the behavior strategy. The utility function for player \(i \in \{1, 2\}\) at each terminal state \(z \in Z\) is denoted by \(u_i: Z \to \mathbb{R}\). In a two-player zero-sum game, it also holds that \(u_1(z) = -u_2(z)\).

A \emph{sequence} \(s=Ia\) represent the history where the player has taken a series of actions to arrive at information set \(I\) and then applies action \(a\). Due to the assumption of perfect recall, this history is unique. The set of \emph{sequences} for player \(i\) is \(\mathcal{S}_i=\{\emptyset\}\cup\{Ia:I\in \mathcal{I}_i,a\in A(I)\}\), where \(\emptyset\) is the \emph{empty sequence}. For an information set \(I\in \mathcal{I}_i\),  The \emph{parent sequence} \(pI \in S_i\) is the last sequence on the path from the root to \(I\), or \(\emptyset\) if player \(i\) has not acted before \(I\). The game tree induces a partial order over states, information sets, and sequences, denoted by \(\prec, \preceq\). For example, \(I'a' \prec Ia\) indicates that the path to action \(a\) at \(I\) passes through action \(a'\) at \(I'\); \(I'a' \preceq Ia\) means \(I'a' \prec Ia\) or \(I'a' = Ia\); and \(I' \preceq I\) means \(I\) is a descendant of \(I'\) or \(I' = I\). The set \(C(I, a)\subset \mathcal{I}_i\) represents the information sets that can be immediately reached when player \(i\) applies action \(a\) in information set \(I\).

The sequence of actions corresponds to each action in every information set, and a probability distribution over these sequences represents the \emph{sequence-form strategy}, represented by a vector \(q_i\in \mathcal{Q}_i \subseteq \mathbb{R}^{|\mathcal{S}_i|}\). This formulation maps one-to-one to the behavior strategy and ensures that the representation remains linear with respect to the size of the game tree, unlike the mixed strategy, which may lead to combinatorial explosion. The sequence-form strategy \(q_i\) is computed as the product of the behavioral strategy \(\sigma_i\) for player \(i\), applied to each action along the sequence:
\begin{equation}
    q_i(Ia) = \prod_{I'a'\in \mathcal{S}_i: I'a' \preceq Ia} \sigma_i(I', a'),
    \label{eq:q}
\end{equation}
with \(q_i(\emptyset)=1\). Conversely:
\begin{equation}
    \sigma_i(I,a)=\frac{q_i(Ia)}{q_i(pI)}
    \label{eq:x}
\end{equation}
where \(q_i(pI)=\sum_{a\in A(I)}q_i(Ia)\). Equation~\eqref{eq:q} and \eqref{eq:x} enable conversion between sequence-form and behavior strategies.

We also define \(q_i(h) = \prod_{h' \preceq h} \sigma_i(h', a')\) as the reach probability of player \(i\) at state \(h\) along that history. Using this representation allows us to treat EFGs similarly to NFGs: \(G=(\mathcal{Q}_1, \mathcal{Q}_2, U)\) , facilitating the search for a Nash equilibrium (NE) in two-player zero-sum games. This is equivalent to finding the saddle point of a convex-concave optimization problem \citep{von1996efficient}:

\begin{equation}
    \min_{q_1 \in Q_1} \max_{q_2 \in Q_2} \langle q_1, -U q_2 \rangle = \max_{q_2 \in Q_2} \min_{q_1 \in Q_1} \langle q_1, -U q_2 \rangle
\label{eq: minmax}
\end{equation}

Here, \(q_1 \in Q_1 \subseteq \mathbb{R}^{|\mathcal{S}_1|}\) and \(q_2 \in Q_2 \subseteq \mathbb{R}^{|\mathcal{S}_2|}\) represent the sequence-form strategies of the players. The term \(|\mathcal{S}|\) denotes the number of sequences for each player, while \(U \in \mathbb{R}^{|\mathcal{S}_1| \times |\mathcal{S}_2|}\) represents the utility matrix for player 1. The game tree can be constructed as treeplexes \citep{hoda2010smoothing}, allowing for an efficient representation of the strategy space.

We use a metric called ``exploitability" to measure how far a strategy \(q = (q_1, q_2)\) is from the Nash equilibrium:
\begin{equation}
     \epsilon(q) := \max_{q_1' \in Q_1} \langle q_1', U q_2 \rangle - \min_{q_2' \in Q_2} \langle q_1, U q_2' \rangle
\end{equation}
Exploitability also represents an approximate \(\epsilon\)-NE, meaning that a strategy with zero exploitability corresponds to an exact NE of the game.

\subsection{Regret Matching and Counterfactual Regret Minimization}
Regret Matching (RM) \citep{hart2000simple} and Regret Matching+ (RM+) are widely used regret minimization algorithms for finding Nash equilibria in NFGs. For a two-player game, let \( U_i \in \mathbb{R}^{|A_1| \times |A_2|} \) denote the utility matrix for player \( i \), where \( A_1 \) and \( A_2 \) are the action sets of players 1 and 2, respectively. The index \( -i \) denotes the opponent of player \( i \), and \( \sigma_{-i} \in \Delta^{|A_{-i}|} \) represents the opponent's strategy. The dynamics of RM and RM+ follow these four steps in each iteration: compute the action loss (negative utility), immediate regret, cumulative regret, and next strategy, as follows:

\begin{align}
    \ell^t_i & = -U_i \sigma^t_{-i} \label{v} \\
    r^t_i & = \langle \ell^t_i, \sigma^t_i \rangle \mathbf{1} - \ell^t_i \label{r} \\
    R^t_i & = \begin{cases}
         R^{t-1}_i + r^t_i, & \text{if RM} \\
         [R^{t-1}_i + r^t_i]_+, & \text{if RM+}
    \end{cases} \label{R} \\
    \sigma^{t+1}_i & = \begin{cases}
        \frac{\left[R^t_i\right]_+}{\|\left[R^t_i\right]_+\|_1}, & \text{if } \|\left[R^t_i\right]_+\|_1 > 0 \\
        \frac{1}{|A_i|}, & \text{otherwise}
    \end{cases} \label{x}
\end{align}
The primary distinction between RM and RM+ is that RM+ ensures non-negative cumulative regret by applying the projection operator \([x]_+ = \max(0, x)\).

Counterfactual Regret Minimization (CFR) \citep{zinkevich2007regret} and CFR+ \citep{tammelin2014solving} are equilibrium-finding algorithms for EFGs that minimize regret within each information set. Each information set uses RM or RM+ as its minimizer independently, and the update process is computed recursively from the bottom up in the game tree. The key difference lies in the computation of action loss in each information set, similar to Equation \eqref{v}, but referred to as counterfactual losses. These values assume that the player arrives at the information set with probability 1, while the opponent uses their strategy to reach that set. The expected value is computed in the subtree rooted at this information set. For an information set \(I \in \mathcal{I}_i\) for player \(i\), with action \(a \in A(I)\), the counterfactual loss is calculated as follows:

\begin{align}
    \ell^t_i(I, a) &= \sum_{h \in I a} \ell_i(h) \nonumber \\
    &= \sum_{h \in I a} \sum_{z \in Z: h \prec z} -q_{-i}^t(h) \sigma^t(z | h) u_i(z) \label{v_cfr_original}\\
    &= (-U_{i}q_{-i}^t)(Ia)+\sum_{I'\in C(I,a)}\sum_{a'\in A(I')}\sigma_i^t(I',a') \ell_i^t(I',a')
    \label{v_cfr}
\end{align}
where \(q_{-i}^t(h)\) represents the opponent's reach probability at state \(h\), and \(\sigma^t(z | h)\) represents the joint reach probability for the players to reach the terminal state \(z\) from state \(h\). The recursive formulation in Equation~\eqref{v_cfr}, computed in a bottom-up manner, equivalently represents the original CFR value function defined in Equation~\eqref{v_cfr_original}. The remaining steps in the dynamics are identical to those in RM/RM+, as given in Equations (\ref{r} to \ref{x}).

The convergence properties of CFR are based on Blackwell's approachability theorem, which guarantees the convergence of the average strategy—a weighted sum of the last strategies, with weights such as uniform (1), linear (\(t\)), or quadratic (\(t^2\)) as the iteration \(t\) progresses. All of these regret minimizers achieve an \(O(\sqrt{T})\) regret bound in the worst case, but often perform much better in practice than the theoretical bound suggests.

\subsection{Reward Transformation Framework}
The Reward Transformation (RT) framework augments the reward function with strongly convex regularization terms, referred to as the reward transformation term (RT-term). This transforms the minimax problem in Equation~\eqref{eq: minmax} into a strongly convex-concave optimization problem (SCCP) \citep{meng2023efficient}, formulated as:
\begin{equation}
    \min_{\sigma_1 \in \Sigma_1} \max_{\sigma_2 \in \Sigma_2} \langle \sigma_1, -U \sigma_2 \rangle + \mu \phi(\sigma_1, \sigma_1^r) - \mu \phi(\sigma_2, \sigma_2^r),
    \label{eq:sccp}
\end{equation}
where \(\phi\) is the regularization function, \(\mu > 0\) is the RT-term weight, and \(\sigma_1^r, \sigma_2^r\) are reference strategies for players 1 and 2, respectively. 
The inclusion of the RT-term modifies the reward gradient to:
\begin{equation}
    \ell_i = -U_i \sigma_{-i} + \mu \frac{\partial \phi(\sigma_i, \sigma_i^r)}{\partial \sigma_i},
    \label{eq:gradient}
\end{equation}
which depends on both the opponent's strategy \(\sigma_{-i}\) and the player's own strategy \(\sigma_i\). This creates an attractor for the dynamics \citep{perolat2021poincare}. For a fixed \(\mu\), the attractor corresponds to the saddle point \(\sigma^{*,r}\) of the SCCP in Equation~\eqref{eq:sccp}, constructed from the reference strategy \(\sigma^r\), which biases the NE \(\sigma^*\) of the original game.

\begin{figure}[ht]
    \centering
    \includegraphics[width=0.8\linewidth]{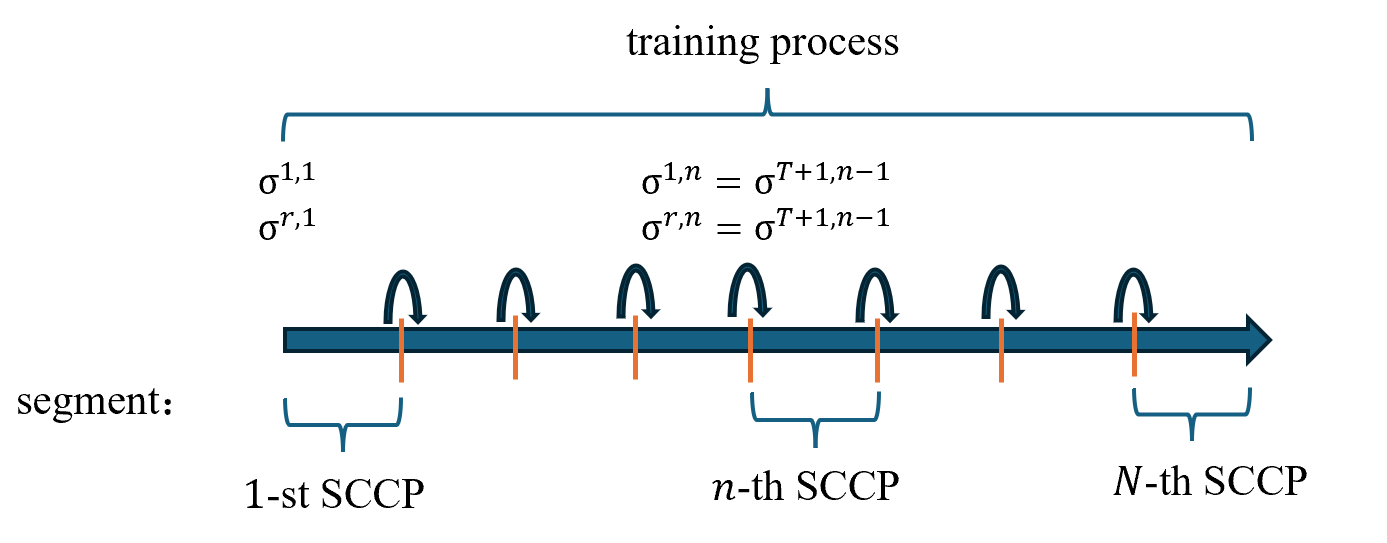}
    \caption{Reward Transformation Framework.}
    \label{fig:sccp}
\end{figure}

To learn the NE \(\sigma^*\) of the original game, the RT framework decomposes the iteration process into a sequence of SCCPs, as illustrated in Figure~\ref{fig:sccp}. In the \(n\)-th SCCP, the RT framework generates a sequence of strategies \(\{\sigma^{1,n}, \dots, \sigma^{T+1,n}\}\) over \(T\) iterations, where \(\sigma^{1,n}\) is an initial strategy inherited from the \((n-1)\)-th SCCP. The saddle point of the \(n\)-th SCCP, denoted \(\sigma^{*,n} := \sigma^{*, (r, n)}\), is defined with respect to the reference strategy \(\sigma^{r,n}\). The sequence of saddle points \(\{\sigma^{*,1}, \dots, \sigma^{*,n}\}\) converges to \(\sigma^*\), as guaranteed by \citet[Theorem 4.3]{meng2023efficient}, under the following assumption:
\begin{assumption}\label{as:sccp_converge}
Let \(\sigma^{T+1,n}\) be the strategy produced by the RT framework in the \(n\)-th SCCP after \(T\) iterations, and let \(\sigma^{*,n}\) be the saddle point of the \(n\)-th SCCP. We assume:
\[
\sigma^{T+1,n} = \sigma^{*,n}.
\]
\end{assumption}

By initializing the \(n\)-th SCCP with the final strategy of the \((n-1)\)-th SCCP as both the reference and initial strategy:
\begin{equation}
    \sigma^{r,n} = \sigma^{T+1,n-1}, \quad \sigma^{1,n} = \sigma^{T+1,n-1},
    \label{eq:initial_sccp}
\end{equation}
the RT framework achieves asymptotic convergence to the NE of the original game, i.e., \(\sigma^{t,n} \to \sigma^{*,n} \to \sigma^*\), though without a precise convergence rate.

\citet{meng2023efficient} propose RTRM+, which employs the Bregman distance with the Euclidean squared norm as the regularization function in Equation~\eqref{eq:sccp}. The dynamics modify only the action loss in Equation~\eqref{v} of RM+, as follows:
\begin{equation}
    \ell_i^t = -U \sigma_{-i}^t + \mu (\sigma_i^{t,n} - \sigma_i^{r,n}),
    \label{eq:rt_cfv}
\end{equation}
while other steps remain unchanged. RTRM+ achieves asymptotic last-iterate convergence within the RT framework \citep[Theorem 5.2]{meng2023efficient}.

\section{Convergence Analysis in the RT Framework}
In this section, we applied RT framework on RM-type algorithms by using Euclid square norm regularization, we analyze the convergence properties of the RT framework by establishing the relationship between the last strategy, the current saddle point of the SCCP, and the NE of the original game. This includes demonstrating the best-iterate convergence of the last strategy in RTRMs toward the saddle point of the SCCP and the asymptotic convergence to the original NE based on Assumption \ref{as:sccp_converge}. Finally, we address practical challenges where deviations from this assumption affect performance, motivating our adaptive method to enhance last-iterate convergence.

\subsection{Best-Iterate Convergence of RTRMs}
In order to satisfy the first convergence condition that satisfy Assumption \ref{as:sccp_converge}, we extend the results from \citet{meng2023efficient} and establish the best-iterate convergence of all RTRMs in each SCCP rather than only RTRM+ shows in \citet{meng2023efficient}:
\begin{theorem}[Best-iterate Convergence of RTRMs in the \(n\)-th SCCP]
    Given any reference strategy \(\sigma^{r,n}\) and RT-term weight \(\mu\), let \(\sigma^{t,n}\) be the strategy sequence produced by RTRMs and \(\sigma^{*,n} \in \Sigma^*\) be the saddle point of the \(n\)-th SCCP. Then, for any \(T \geq 1\), there exists \(0 < t \leq T\) such that
    
        \[
    \|\sigma^{*,n} - \sigma^{t,n}\|_2 \leq \frac{C}{\sqrt{T}}.
    \]
    where \(C>0\) is a constant depending on \(\mu,\sigma^{r,n}\) and the game.
    
    \label{thm:best_iterate_convergence_of_rtrms}
\end{theorem}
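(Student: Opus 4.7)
The plan is to combine the classical $O(\sqrt{T})$ regret bound of Blackwell-style regret matching with the $\mu$-strong monotonicity that the reward transformation injects into the joint gradient field, and then apply a pigeonhole step to turn the resulting summation bound into a best-iterate bound. The key structural observation is that the two augmented losses from Equation~\eqref{eq:rt_cfv} jointly define the operator
\[
F(\sigma_1,\sigma_2) \;=\; \bigl(-U\sigma_2 + \mu(\sigma_1 - \sigma_1^{r,n}),\; U^{\top}\sigma_1 + \mu(\sigma_2 - \sigma_2^{r,n})\bigr),
\]
and expanding $\langle F(\sigma)-F(\sigma'),\sigma-\sigma'\rangle$ shows that the bilinear cross terms involving $U$ cancel, leaving exactly $\mu\|\sigma-\sigma'\|_2^2$. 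Combined with the variational first-order optimality $\langle F(\sigma^{*,n}),\sigma-\sigma^{*,n}\rangle \geq 0$ at the unique saddle point of the $n$-th SCCP, strong monotonicity yields at every iterate the pointwise inequality
\[
\langle \ell_1^t,\sigma_1^{t,n}-\sigma_1^{*,n}\rangle + \langle \ell_2^t,\sigma_2^{t,n}-\sigma_2^{*,n}\rangle \;\geq\; \mu\|\sigma^{t,n}-\sigma^{*,n}\|_2^2 .
\]

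The next step is to sum this inequality for $t=1,\dots,T$. The left-hand side then equals the sum of the two players' cumulative regrets against the fixed comparator $\sigma^{*,n}$ on the augmented loss sequence $\{\ell_i^t\}$. Since $\sigma_i^{t,n}$ and $\sigma_i^{r,n}$ both lie in the compact strategy simplex and $U$ has bounded entries, each loss admits a uniform bound $\|\ell_i^t\|_\infty \leq M$ with $M$ depending only on $\mu$, $\sigma^{r,n}$ and the game. Feeding this loss bound into the standard Blackwell-approachability analysis underlying RM and its variants produces the regret bound $R_i^T \leq M\sqrt{|A_i|\,T}$ for every RTRM in the family, whence $\sum_{t=1}^{T}\|\sigma^{t,n}-\sigma^{*,n}\|_2^2 \leq C'\sqrt{T}$ for a constant $C'$ absorbing $M$, $\mu$ and $|A_i|$. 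A pigeonhole argument on $a_t := \|\sigma^{t,n}-\sigma^{*,n}\|_2^2$ then identifies some $t^\star \leq T$ with $a_{t^\star} \leq \tfrac{1}{T}\sum_t a_t \leq C'/\sqrt{T}$, from which the claimed bound follows once the remaining constants are bookkept into $C$.

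The hardest part will be extending the regret analysis uniformly to the whole RTRMs family rather than only to the RTRM+ instance treated in \citet{meng2023efficient}. Two subtleties deserve care. First, the augmented loss $\ell_i^t$ depends on the current iterate $\sigma_i^{t,n}$ through the RT-term, so the uniform bound $M$ must be certified from the compactness of $\Sigma_i$ alone, with no circular dependence on the to-be-proved regret bound; this is why I would derive $M$ directly from simplex/treeplex diameter and $\|U\|$ before invoking Blackwell. Second, the nonnegative-projection step $[\,\cdot\,]_+$ that distinguishes RM+ from plain RM has to be shown to preserve the Blackwell regret guarantee under \emph{adaptive} losses, so that a single $M$ drives the uniform $O(\sqrt{T})$ regret rate across the family. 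Once these two points are settled, what remains is elementary arithmetic and the pigeonhole inequality.
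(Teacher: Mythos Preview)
Your strong-monotonicity observation and the pointwise inequality
\[
\langle \ell_1^t,\sigma_1^{t,n}-\sigma_1^{*,n}\rangle + \langle \ell_2^t,\sigma_2^{t,n}-\sigma_2^{*,n}\rangle \;\geq\; \mu\|\sigma^{t,n}-\sigma^{*,n}\|_2^2
\]
are both correct, but the argument breaks down at the rate. You bound the summed left-hand side by the Blackwell regret $O(M\sqrt{|A_i|\,T})$, which gives only $\sum_{t\leq T}\|\sigma^{t,n}-\sigma^{*,n}\|_2^2 \leq C'\sqrt{T}$. Pigeonholing on $a_t=\|\sigma^{t,n}-\sigma^{*,n}\|_2^2$ then yields $a_{t^\star}\leq C'/\sqrt{T}$, i.e.\ $\|\sigma^{t^\star,n}-\sigma^{*,n}\|_2 \leq \sqrt{C'}\,T^{-1/4}$. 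That is a $T^{-1/4}$ best-iterate rate, not the $T^{-1/2}$ claimed in the theorem; the ``bookkeeping into $C$'' at the end cannot absorb a factor of $T^{1/4}$.

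The paper obtains the stronger rate by \emph{not} treating the regret minimizer as a black box. Instead it invokes the equivalence between RM-type updates and unconstrained OMD on the cumulative-regret vector $\theta$ \citep{farina2021faster}, and then runs the OMD one-step inequality directly. With the RT-term present, the $\mu$-strong monotonicity lets the $\eta$-dependent second-order term be \emph{absorbed} rather than summed: one gets a telescoping bound of the form
\[
\sum_{t=1}^{T} C_1\, D_\psi(\sigma^{*,n},\sigma^{t,n}) \;\leq\; C_2,
\]
with $C_1 = 2\eta\mu-(\eta C_0)^2>0$ and $C_2$ a constant depending on the initial Bregman distance and $\sigma^{r,n}$ but \emph{independent of $T$}. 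Pigeonholing on this $O(1)$ sum then yields $\|\sigma^{*,n}-\sigma^{t,n}\|_2\leq \sqrt{2C_2/(C_1 T)}$, which is the desired $O(1/\sqrt{T})$. In short, the missing idea in your plan is to open up the RM/OMD update and exploit the telescoping structure; the black-box Blackwell bound is one factor of $\sqrt{T}$ too loose for this theorem.
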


Directly using the RTRMs dynamics to prove that the last strategy converges to the saddle point is difficult. An alternative method is to convert it to the OMD formulation, leveraging the equivalence of these algorithms as shown in \citep{farina2021faster}. The proof details are provided in Appendix \ref{app:proof_of_thm:best}.

\subsection{Asymptotic Last-Iterate Convergence of RTRMs}
Based on Theorem \ref{thm:best_iterate_convergence_of_rtrms}, we ensure that the last strategy \(\sigma^{t,n}\) converges to the saddle point of the \(n\)-th SCCP. We now prove the second condition: that the sequence of saddle points \(\{\sigma^{*,1}, \dots, \sigma^{*,n}\}\) converges to the NE of the original game, which establishes asymptotic convergence of the last strategy.

We begin with the following lemma for the \(n\)-th SCCP:

\begin{lemma}
Let \(\sigma^{r,n}\) be a reference strategy, \(\mu > 0\) the RT-term weight, \(\sigma^{*,n} \in \Sigma^*\) the saddle point of the \(n\)-th SCCP, and \(\sigma^{*} \in \Sigma^*\) the NE of the original game. Provided \(\sigma^{r,n} \neq \sigma^{*,n} \neq \sigma^{*}\), the following hold:
\begin{equation}
    \|\sigma^{*} - \sigma^{*,n}\|_2 \leq \|\sigma^{*} - \sigma^{r,n}\|_2,
    \label{eq:saddle_reference_NE}
\end{equation}
and
\begin{equation}
    \|\sigma^{*} - \sigma^{r,n}\|_2 - \|\sigma^{*} - \sigma^{*,n}\|_2 \geq \frac{C^2}{\mu^2 (\|\sigma^{*} - \sigma^{r,n}\|_2 + \|\sigma^{*} - \sigma^{*,n}\|_2)},
    \label{eq:bound_reference_saddle}
\end{equation}
where \(C > 0\) is a constant depending only on the game.
\label{le:relation_of_3_points}
\end{lemma}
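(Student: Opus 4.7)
The plan is to translate both $\sigma^{*,n}$ and $\sigma^*$ into variational-inequality (VI) form, use the skew-symmetric structure of the zero-sum saddle operator to eliminate all bilinear terms, and then read off both conclusions from a single three-point identity. Let $F(\sigma) = (-U\sigma_2,\, U^\top \sigma_1)$ be the affine saddle operator for \eqref{eq: minmax}; its linear part is skew-symmetric, so the identity $\langle F(x) - F(y),\, x - y\rangle = 0$ holds for every $x,y \in \Sigma := \Sigma_1 \times \Sigma_2$. With $\phi(\sigma_i, \sigma_i^r) = \tfrac{1}{2}\|\sigma_i - \sigma_i^r\|_2^2$, the SCCP in \eqref{eq:sccp} is strongly convex--concave, so $\sigma^{*,n}$ is its unique saddle and satisfies $\langle F(\sigma^{*,n}) + \mu(\sigma^{*,n} - \sigma^{r,n}),\, \sigma - \sigma^{*,n}\rangle \ge 0$ for every $\sigma \in \Sigma$, while the NE obeys $\langle F(\sigma^*),\, \sigma - \sigma^*\rangle \ge 0$ for every $\sigma \in \Sigma$.

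Next I would probe the SCCP VI at $\sigma = \sigma^*$ and the NE VI at $\sigma = \sigma^{*,n}$, then add the two inequalities. Applying the skew-symmetry identity to the pair $(\sigma^{*,n}, \sigma^*)$ cancels the $F$-contributions exactly and collapses the sum to the clean inner-product bound
\[\langle \sigma^{*,n} - \sigma^{r,n},\; \sigma^* - \sigma^{*,n}\rangle \ge 0. \qquad(\star)\]
Expanding $\|\sigma^* - \sigma^{r,n}\|_2^2$ about $\sigma^{*,n}$ through the three-point identity gives
\[\|\sigma^* - \sigma^{r,n}\|_2^2 = \|\sigma^* - \sigma^{*,n}\|_2^2 + 2\langle \sigma^* - \sigma^{*,n},\, \sigma^{*,n} - \sigma^{r,n}\rangle + \|\sigma^{*,n} - \sigma^{r,n}\|_2^2,\]
and combining with $(\star)$ immediately produces the Pythagoras-type bound $\|\sigma^* - \sigma^{r,n}\|_2^2 \ge \|\sigma^* - \sigma^{*,n}\|_2^2 + \|\sigma^{*,n} - \sigma^{r,n}\|_2^2$. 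Dropping the last term yields \eqref{eq:saddle_reference_NE}; keeping it and factoring the difference of squares on the left yields
\[\|\sigma^* - \sigma^{r,n}\|_2 - \|\sigma^* - \sigma^{*,n}\|_2 \;\ge\; \frac{\|\sigma^{*,n} - \sigma^{r,n}\|_2^2}{\|\sigma^* - \sigma^{r,n}\|_2 + \|\sigma^* - \sigma^{*,n}\|_2},\]
which matches \eqref{eq:bound_reference_saddle} up to a lower bound of the form $\|\sigma^{*,n} - \sigma^{r,n}\|_2 \ge C/\mu$.

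The main obstacle is this last $C/\mu$ lower bound, which is the only place where $\mu$ must enter. I would start from the KKT form of the SCCP saddle: there exists $\nu \in N_\Sigma(\sigma^{*,n})$ such that $F(\sigma^{*,n}) + \mu(\sigma^{*,n} - \sigma^{r,n}) + \nu = 0$, whence
\[\mu \|\sigma^{*,n} - \sigma^{r,n}\|_2 \;=\; \|F(\sigma^{*,n}) + \nu\|_2 \;\ge\; \operatorname{dist}\!\bigl(0,\, F(\sigma^{*,n}) + N_\Sigma(\sigma^{*,n})\bigr).\]
Because $\sigma^{*,n} \ne \sigma^*$ by hypothesis, $\sigma^{*,n}$ cannot satisfy the NE VI, so $0 \notin F(\sigma^{*,n}) + N_\Sigma(\sigma^{*,n})$ and the displayed residual is strictly positive. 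The delicate step is to promote this pointwise strict positivity into a single game-dependent constant $C > 0$; I would do this using compactness of $\Sigma$ together with lower semicontinuity of $\sigma \mapsto \operatorname{dist}(0, F(\sigma) + N_\Sigma(\sigma))$ on the closed set $\Sigma \setminus \Sigma^*$, in the same spirit as the Lyapunov-based non-degeneracy arguments used by \citet{meng2023efficient}. Plugging this $C/\mu$ lower bound into the factored inequality then delivers \eqref{eq:bound_reference_saddle}.
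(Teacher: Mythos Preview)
Your derivation of the Pythagorean bound and the factored difference-of-squares inequality is essentially the paper's argument, recast in VI language: the paper also probes the SCCP saddle inequality at $\sigma^*$, sums over players, uses the zero-sum structure to cancel the bilinear terms, and arrives at exactly your $(\star)$ and then $\|\sigma^*-\sigma^{r,n}\|_2^2 \ge \|\sigma^*-\sigma^{*,n}\|_2^2 + \|\sigma^{*,n}-\sigma^{r,n}\|_2^2$.

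The genuine gap is in your $C/\mu$ step. The compactness argument you sketch does not go through: $\Sigma\setminus\Sigma^*$ is \emph{not} closed (it is the complement in $\Sigma$ of the closed set $\Sigma^*$), and the natural residual $\sigma\mapsto\operatorname{dist}\bigl(0,\,F(\sigma)+N_\Sigma(\sigma)\bigr)$ tends to zero as $\sigma\to\Sigma^*$. Hence its infimum over $\Sigma\setminus\Sigma^*$ is zero, not a positive game constant. Since $\sigma^{*,n}$ can be made arbitrarily close to $\Sigma^*$ by varying $\sigma^{r,n}$, no uniform lower bound of the form $\operatorname{dist}(0,F(\sigma^{*,n})+N_\Sigma(\sigma^{*,n}))\ge C$ survives, and your KKT inequality then only gives $\mu\|\sigma^{*,n}-\sigma^{r,n}\|_2$ bounded below by something that may vanish.

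The paper closes this gap differently. When combining the two VIs it \emph{retains} the cross term you discarded---the duality-gap quantity $\epsilon(\sigma^{*,n})$---obtaining
\[
\epsilon(\sigma^{*,n}) \;\le\; \mu\,\langle\sigma^*-\sigma^{*,n},\,\sigma^{*,n}-\sigma^{r,n}\rangle \;\le\; \mu\,\|\sigma^*-\sigma^{*,n}\|_2\,\|\sigma^{*,n}-\sigma^{r,n}\|_2
\]
via Cauchy--Schwarz. It then invokes Lemma~\ref{lemma:MS} (saddle-point metric subregularity), which gives the matching lower bound $\epsilon(\sigma^{*,n}) \ge C\,\|\sigma^*-\sigma^{*,n}\|_2$. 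Dividing by $\|\sigma^*-\sigma^{*,n}\|_2>0$ cancels that factor and yields $\|\sigma^{*,n}-\sigma^{r,n}\|_2 \ge C/\mu$ directly. The point is that $\|\sigma^*-\sigma^{*,n}\|_2$ appears on \emph{both} sides of the sandwich and cancels; a raw lower-semicontinuity bound on the residual cannot manufacture this cancellation, and that is exactly what is missing from your proposal.
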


The proof is provided in Appendix~\ref{app:proof_of_le:relation_of_3_points}. We now establish the asymptotic convergence of RTRMs:

\begin{theorem}[Asymptotic Last-Iterate Convergence of RTRMs]
Let \(\{\sigma^{t,n}\}\) be the sequence generated by RTRMs. Then, \(\sigma^{t,n}\) is bounded and converges to the NE \(\sigma^{*} \in \Sigma^*\) of the original game.
\label{thm:asymptotic}
\end{theorem}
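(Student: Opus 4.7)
The plan is to track only the sequence of end-of-SCCP strategies and show they converge to $\sigma^*$, invoking Assumption~\ref{as:sccp_converge} to close the loop between SCCPs and Lemma~\ref{le:relation_of_3_points} to drive the distance to $\sigma^*$ to zero. Boundedness is immediate since every iterate $\sigma^{t,n}$ lies in the compact strategy space.

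First I would set $d_n := \|\sigma^{*} - \sigma^{r,n}\|_2$. By Assumption~\ref{as:sccp_converge} together with the reinitialization rule~\eqref{eq:initial_sccp}, we have $\sigma^{r,n+1} = \sigma^{T+1,n} = \sigma^{*,n}$, so $d_{n+1} = \|\sigma^{*} - \sigma^{*,n}\|_2$. Plugging this into the first bound of Lemma~\ref{le:relation_of_3_points} gives $d_{n+1} \le d_n$, i.e.\ $\{d_n\}$ is nonincreasing and nonnegative, hence converges to some limit $d^{\infty} \ge 0$. Along the way one should note the degenerate case: if at some $n$ one of the equalities $\sigma^{r,n} = \sigma^{*,n}$ or $\sigma^{*,n} = \sigma^{*}$ holds, then the iterates are pinned to $\sigma^{*}$ from that point on (in the latter case) or one easily argues we are already at a saddle of the original game, so we may restrict attention to the non-degenerate regime where Lemma~\ref{le:relation_of_3_points} applies.

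The next step is to rule out $d^{\infty} > 0$ using the quantitative bound~\eqref{eq:bound_reference_saddle}:
\[
d_n - d_{n+1} \;\ge\; \frac{C^2}{\mu^2\,(d_n + d_{n+1})}.
\]
If $d^{\infty} > 0$, then for all sufficiently large $n$ we have $d_n + d_{n+1} \le 3 d^{\infty}$, so $d_n - d_{n+1} \ge C^2/(3\mu^2 d^{\infty}) > 0$; telescoping a finite tail of this inequality forces $d_n \to -\infty$, contradicting nonnegativity. Hence $d^{\infty} = 0$, which gives $\sigma^{r,n} \to \sigma^{*}$ and therefore $\sigma^{*,n} = \sigma^{T+1,n} \to \sigma^{*}$. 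Since $\sigma^{1,n+1} = \sigma^{T+1,n}$, the end-of-SCCP subsequence of the full iterate stream converges to the Nash equilibrium, establishing asymptotic last-iterate convergence in the sense stated.

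\paragraph{Main obstacle.}
The delicate point I expect to have to be careful about is the role of Assumption~\ref{as:sccp_converge}. The monotone-decrease argument for $d_n$ relies on the exact identity $\sigma^{r,n+1} = \sigma^{*,n}$; without it, Theorem~\ref{thm:best_iterate_convergence_of_rtrms} only guarantees a best-iterate rate $O(1/\sqrt{T})$ inside each SCCP, so one would have to allow an error $\varepsilon_n$ in the reference update and verify that the telescoping contradiction still goes through provided $\sum_n \varepsilon_n$ is small enough relative to $C^2/\mu^2$. The cleanest path, consistent with the way Assumption~\ref{as:sccp_converge} is stated in the excerpt, is simply to invoke it and carry out the telescoping on the idealized recursion; any refinement handling inexact SCCP solutions is a perturbation of this core argument.
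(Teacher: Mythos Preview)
Your argument is essentially the paper's own proof: define the distances, use Assumption~\ref{as:sccp_converge} with the reinitialization rule to identify $\sigma^{r,n}=\sigma^{*,n-1}$, apply Lemma~\ref{le:relation_of_3_points} to get the recursive decrease inequality, and conclude the saddle-point sequence converges to $\sigma^*$. Your explicit contradiction argument for ruling out $d^\infty>0$ is in fact more careful than the paper, which simply asserts the limit is $0$ by monotone convergence.

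The one place you stop short of the paper is the final step: the theorem claims convergence of the full doubly-indexed sequence $\sigma^{t,n}$, not just the end-of-SCCP subsequence. The paper handles this via the triangle inequality
\[
\|\sigma^{*}-\sigma^{t,n}\|_2 \le \|\sigma^{*}-\sigma^{*,n}\|_2 + \|\sigma^{*,n}-\sigma^{t,n}\|_2,
\]
and then invokes Theorem~\ref{thm:best_iterate_convergence_of_rtrms} to make the second term vanish as $t\to\infty$ within each SCCP. You should add this step rather than concluding only from the end-of-SCCP subsequence; your ``Main obstacle'' paragraph already identifies exactly the ingredient (Theorem~\ref{thm:best_iterate_convergence_of_rtrms}) needed here, so it is a small patch.
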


\begin{proof}
Let \(\sigma^{t,n}\) denote the last strategy of the \(n\)-th SCCP, \(\sigma^{*,n}\) its saddle point, and \(\sigma^{*}\) the NE of the original game. The distance from \(\sigma^{t,n}\) to \(\sigma^{*}\) is bounded by the triangle inequality:
\begin{equation}
    \|\sigma^{*} - \sigma^{t,n}\|_2 \leq \|\sigma^{*} - \sigma^{*,n}\|_2 + \|\sigma^{*,n} - \sigma^{t,n}\|_2.
    \label{eq:bound}
\end{equation}
Here, \(\|\sigma^{*} - \sigma^{*,n}\|_2\) measures the distance between the NE and the \(n\)-th saddle point, and \(\|\sigma^{*,n} - \sigma^{t,n}\|_2\) captures the inner iteration error.

By the RT framework initialization (Equation~\eqref{eq:initial_sccp}) and Assumption~\ref{as:sccp_converge}, we have \(\sigma^{r,n} = \sigma^{T+1,n-1} = \sigma^{*,n-1}\). Substituting into Equation~\eqref{eq:bound_reference_saddle}, we obtain:
\begin{equation}
    \|\sigma^{*} - \sigma^{*,n-1}\|_2 - \|\sigma^{*} - \sigma^{*,n}\|_2 \geq \frac{C^2}{\mu^2 (\|\sigma^{*} - \sigma^{*,n-1}\|_2 + \|\sigma^{*} - \sigma^{*,n}\|_2)}.
    \label{eq:difference_reference}
\end{equation}
Equation~\eqref{eq:difference_reference} implies that \(\{\|\sigma^{*} - \sigma^{*,n}\|_2\}\) is strictly decreasing, non-negative, and decreases faster when \(\sigma^{*,n-1}\) is closer to \(\sigma^{*}\). By the monotone convergence theorem, it converges to a limit \(L = 0\). Meanwhile, Theorem~\ref{thm:best_iterate_convergence_of_rtrms} ensures that \(\|\sigma^{*,n} - \sigma^{t,n}\|_2\) vanishes as \(t \to \infty\).

Thus,
\[
\lim_{n \to \infty, t \to \infty} \|\sigma^{*} - \sigma^{t,n}\|_2 = 0,
\]
establishing the asymptotic last-iterate convergence of RTRMs.
\end{proof}

\subsection{The Bottleneck of the RT Framework in Practical Settings}\label{sec:bottleneck}
\citet{meng2023efficient} identify a bottleneck in the RT framework, originally developed for the MWU algorithm, concerning the convergence rate of each SCCP. They extend the framework to the more practical RM+ algorithm. However, this bottleneck remains unaddressed, as the framework relies on the impractical Assumption~\ref{as:sccp_converge}. This assumption renders the theoretical convergence guarantees of the RT framework difficult to achieve in practice, as computing an exact saddle point in each SCCP iteration is generally infeasible. Typically, only an approximate saddle point is obtained, which violates Assumption~\ref{as:sccp_converge}. Consequently, the original RTRM+ method requires manual tuning of the interval \( T \) and the regularization parameter \( \mu \), resulting in suboptimal performance.

Theorem \ref{thm:best_iterate_convergence_of_rtrms} implies that the last strategy may oscillate around the saddle point. A fixed interval \(T\) may inadvertently select a suboptimal strategy to conclude the SCCP, which is then used as the reference in the next SCCP, thus seriously undermining convergence guarantees.

Moreover, the choice of the RT-term weight \(\mu\) in Equation \eqref{eq:sccp} significantly affects the convergence rate in practice, as established in Theorem~\ref{thm:best_iterate_convergence_of_rtrms}. Additionally, \(\mu\) bounds the distance between the saddle point of the SCCP and the NE of the original game~\citep{perolat2021poincare,abe2022mutation}. The detailed influence of \(\mu\) on convergence is discussed in Section~\ref{sec:adaptive_weight}. A fixed weight throughout the entire training process may lead to inefficiencies in convergence.

Furthermore, applying the RT framework to RTRMs brings unique challenges. The original dynamics of regret matching are specifically designed for average strategy convergence, where the strategy is updated using cumulative regret matching that accounts for the immediate regret from the first step to the current one with equal weighting. Since initial strategies are often far from the NE due to uniform initialization or being trained from scratch, using equal weighting across all iterations can be suboptimal. The RT framework, however, focuses on the last-iterate convergence, where the immediate regret tends to decrease as the last strategy converges. Thus, there is a question of whether RTRMs could achieve a more efficient cumulative regret process to align better with the goal of last-iterate convergence.

These observations motivated us to enhance the RT framework's performance by adaptively selecting the reference strategy, dynamically adjusting the weight parameter, and improving the efficiency of the regret accumulation method, as discussed in the next section.

\section{Adaptive RT Framework}
In this section, building on our earlier observations in Section~\ref{sec:bottleneck}, we present an adaptive mechanism comprising three adaptive methods. These methods include: (1) adaptively selecting the reference strategy to conclude the SCCP using an exploitability descent metric, (2) adjusting the weight \(\mu\) to balance convergence between the SCCP's saddle point and the NE of the original game, and (3) implementing a discounted cumulative regret technique that assigns greater weight to recent immediate regrets to enhance regret accumulation efficiency. These adaptive methods enable RTRMs to achieve linear last-iterate convergence throughout the entire training process.

\subsection{Adaptive Reference Strategy}
Lemma \ref{le:relation_of_3_points} implies that the saddle point of the SCCP is closer to the NE of the original game than the reference strategy. Thus, by appropriately handling the reference strategy, we can enhance convergence. To do so, we need a metric that measures the quality of the last strategy without knowing the NE of the original game. Exploitability is a useful metric for this purpose, as it represents the difference between a given strategy and the NE. Recall that zero-sum games satisfy the following metric subregularity condition:
    \begin{lemma}[Adapted from Saddle Point Metric Subregularity in \citealt{wei2020linear}]
    Let \(\sigma^* \in \Sigma^*\) be the Nash equilibrium of the game. There exists a constant \(C > 0\) (depending only on the game) such that for any strategy \(\sigma \in \Sigma \setminus \Sigma^*\), the following holds:
    \[
    \|\sigma^* - \sigma\|_2 \leq \frac{\epsilon(\sigma)}{C}.
    \]
    \label{lemma:MS}
    \end{lemma}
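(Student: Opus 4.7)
The plan is to derive this lemma as a metric subregularity estimate that flows directly from the polyhedral structure of two-player zero-sum games, following the route taken by Wei et al. The three facts I would exploit are that exploitability $\epsilon$ is a nonnegative piecewise linear convex function of $\sigma$, its zero set is exactly $\Sigma^*$, and $\Sigma^*$ is a polytope. Hoffman's lemma then produces a global linear error bound of the desired form.

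First, I would rewrite exploitability as
\[
\epsilon(\sigma) = \max_{\sigma_1' \in \Sigma_1}\langle \sigma_1', U\sigma_2\rangle \;-\; \min_{\sigma_2' \in \Sigma_2}\langle \sigma_1, U\sigma_2'\rangle,
\]
and note that because $\Sigma_1$ and $\Sigma_2$ are polytopes (the simplex in NFGs, the sequence-form polytope/treeplex in EFGs), the inner max and min are attained at vertices. Hence $\epsilon$ is the pointwise maximum minus the pointwise minimum over a finite family of affine functions of $\sigma$, which is piecewise linear and convex on the compact polytope $\Sigma$. Next, I would characterize $\Sigma^*$ via the finitely many best-response inequalities against the game value $v^\ast$, together with the linear constraints defining $\Sigma$, so that $\Sigma^*$ is a nonempty polyhedron written explicitly as the solution set of a finite linear system $A\sigma \leq b$.

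Third, I would invoke Hoffman's lemma: for any polyhedron $P = \{\sigma : A\sigma \leq b\}$, there is a constant $C_A > 0$ depending only on $A$ such that
\[
\mathrm{dist}_2(\sigma, P) \;\leq\; C_A^{-1}\, \bigl\|\,[A\sigma - b]_+\,\bigr\|_\infty .
\]
Applied to the defining system of $\Sigma^*$, the right-hand side is precisely the worst-case best-response gap of $\sigma$, hence is upper bounded by $\epsilon(\sigma)$ up to a universal factor absorbing the scale of $U$. This yields $\mathrm{dist}_2(\sigma, \Sigma^*) \leq \epsilon(\sigma)/C$ for some $C>0$ depending only on the game. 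Choosing $\sigma^*$ as the projection of $\sigma$ onto the closed convex set $\Sigma^*$ then gives the stated bound, which is vacuously sharp when $\sigma \in \Sigma^*$ and meaningful on $\Sigma \setminus \Sigma^*$.

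The step I expect to be the main obstacle is the clean translation between exploitability and the raw Hoffman-style constraint violation, since one must verify that $\epsilon(\sigma)$ dominates the maximum inequality violation of the best-response system (not the other way around); this requires exhibiting, for any $\sigma\notin\Sigma^*$, a pure-strategy deviation that witnesses the slack. For EFGs an additional bookkeeping step translates the sequence-form utility into the polyhedral form, but since the sequence-form polytope is still a finite intersection of half-spaces the same argument applies, with the resulting constant $C$ absorbing only game-specific quantities (the utility-matrix scale and the Hoffman constant of $A$).
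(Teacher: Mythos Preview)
The paper does not supply its own proof of this lemma; it is stated as ``Adapted from Saddle Point Metric Subregularity in \citealt{wei2020linear}'' and used as a black box. Your Hoffman-lemma route is exactly the argument underlying that cited result (and the earlier Gilpin--Pe\~na--Sandholm error bound): characterize $\Sigma^*$ by finitely many linear best-response inequalities against the game value, observe that exploitability dominates the maximum constraint violation, and apply the polyhedral error bound. One small point: the lemma as stated in the paper fixes a single $\sigma^*\in\Sigma^*$, whereas what Hoffman actually yields is $\mathrm{dist}_2(\sigma,\Sigma^*)\le \epsilon(\sigma)/C$; your choice of $\sigma^*$ as the projection is the right reading, and the paper's phrasing should be understood in that sense.
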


Applying Lemma \ref{le:relation_of_3_points} to Equation (\ref{eq:bound}), we have:
\begin{equation}
    \|\sigma^{*} - \sigma^{t,n}\|_2 < \|\sigma^{*} - \sigma^{r,n}\|_2 + \|\sigma^{*,n} - \sigma^{t,n}\|_2.
\label{eq:bound 2}
\end{equation}

Using Lemma \ref{lemma:MS}, we obtain:
\[
    \|\sigma^{*} - \sigma^{t,n}\|_2 < \frac{\epsilon(\sigma^{r,n})}{C} + O\left(\frac{1}{\sqrt{T}}\right).
\]

If we set \(\epsilon(\sigma^{r,n}) \leq \frac{\epsilon(\sigma^{r,n-1})}{2}\), then \(\|\sigma^{*} - \sigma^{t,n}\|_2 \leq O\left(\frac{1}{2^n}\right) + O\left(\frac{1}{\sqrt{T}}\right)\). This implies that we can compute the exploitability of the last strategy in each SCCP to select a suitable reference, enabling adaptive RTRMs to achieve linear last-iterate convergence in practice.

\subsection{Adaptive Reward Transformation Weight}\label{sec:adaptive_weight}
The RT-term transforms the reward, enabling convergence to the saddle point, but it also alters the position of the saddle point in the dynamics. The distance between the saddle point of an SCCP and the NE of the original game is influenced by the weight \(\mu\), as shown in Theorem \ref{theorem: mu bound}:

\begin{theorem}
    Let \(\sigma^{*} \in \Sigma^*\) and \(\sigma^{*,n} \in \Sigma^{*,n}\) be the Nash equilibrium of the original game and the saddle point of the \(n\)-th SCCP, respectively. Given any reference strategy \(\sigma^{r,n}\) and parameter \(\mu\), the exploitability of \(\sigma^{*,n}\) satisfies:
    \[
        \epsilon(\sigma^{*,n}) \leq 2\mu.
    \]
    \label{theorem: mu bound}
\end{theorem}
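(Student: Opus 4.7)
The plan is to read the bound off directly from the first-order (saddle-point) optimality conditions of the SCCP in Equation~\eqref{eq:sccp}. Since exploitability decomposes cleanly into two best-response gaps, one per player, I will bound each gap by $\mu$ using the defining saddle-point inequality together with the fact that $\phi=\tfrac12\|\cdot-\cdot\|_2^2$ is bounded above by $1$ on the probability simplex, and then add the two gaps to obtain the claim.

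In detail, let $L(\sigma_1,\sigma_2):=\langle\sigma_1,-U\sigma_2\rangle+\mu\phi(\sigma_1,\sigma_1^r)-\mu\phi(\sigma_2,\sigma_2^r)$, so that $(\sigma_1^{*,n},\sigma_2^{*,n})$ is the saddle point of $L$. For any $\tilde\sigma_1\in\Sigma_1$, the inequality $L(\sigma_1^{*,n},\sigma_2^{*,n})\le L(\tilde\sigma_1,\sigma_2^{*,n})$ rearranges, after canceling the common $-\mu\phi(\sigma_2^{*,n},\sigma_2^r)$ term on both sides and dropping the nonnegative $\mu\phi(\sigma_1^{*,n},\sigma_1^r)$ on the left, into
\[
\langle\tilde\sigma_1,U\sigma_2^{*,n}\rangle-\langle\sigma_1^{*,n},U\sigma_2^{*,n}\rangle\le\mu\,\phi(\tilde\sigma_1,\sigma_1^r).
\]
Choosing $\tilde\sigma_1\in\argmax_{\sigma_1\in\Sigma_1}\langle\sigma_1,U\sigma_2^{*,n}\rangle$ and using $\phi\le 1$ caps player 1's best-response gap by $\mu$. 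An entirely symmetric manipulation of the other saddle-point inequality $L(\sigma_1^{*,n},\tilde\sigma_2)\le L(\sigma_1^{*,n},\sigma_2^{*,n})$ yields
\[
\langle\sigma_1^{*,n},U\sigma_2^{*,n}\rangle-\langle\sigma_1^{*,n},U\tilde\sigma_2\rangle\le\mu\,\phi(\tilde\sigma_2,\sigma_2^r),
\]
which, on choosing $\tilde\sigma_2\in\argmin_{\sigma_2\in\Sigma_2}\langle\sigma_1^{*,n},U\sigma_2\rangle$ and again invoking $\phi\le 1$, caps player 2's gap by $\mu$. Summing the two bounds reconstructs $\epsilon(\sigma^{*,n})$ by the definition in Section~3.2 and gives the desired $\le 2\mu$.

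The only non-mechanical ingredient is the uniform bound $\phi(\sigma,\sigma^r)\le 1$. On the probability simplex this is immediate: $\tfrac12\|\sigma-\sigma^r\|_2^2$ is maximized between two distinct vertices, where it equals $1$, so the constant $2\mu$ is tight for NFGs. For the sequence-form polytope $\mathcal{Q}_i$ used later in the EFG analysis, the $\ell_2$-diameter can grow with the depth of the game tree, and I expect the proof either to restrict the statement implicitly to per-information-set (behavior) strategies — so that each decision point contributes at most $\mu$ independently — or to rescale $\phi$ by a treeplex-dependent constant that still collapses to $1$ in the behavior-strategy view. Apart from pinning down the correct space on which $\phi\le 1$ is invoked, the rest of the argument is a one-line rearrangement of the two saddle-point inequalities.
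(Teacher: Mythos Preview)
Your argument is correct, but it is not the route the paper takes. The paper does not use the function-value saddle-point inequalities of $L$; instead it quotes the gradient-based inequality~\eqref{eq:exp_saddle}, namely $\epsilon(\sigma^{*,n})\le\mu\,\|\sigma^{*}-\sigma^{*,n}\|_2\,\|\sigma^{*,n}-\sigma^{r,n}\|_2$, which is obtained inside the proof of Lemma~\ref{le:relation_of_3_points} by summing the variational inequalities $\langle\sigma_i,\ell_i^{*,n}\rangle\ge\langle\sigma_i^{*,n},\ell_i^{*,n}\rangle$ over both players and applying Cauchy--Schwarz; then each norm is capped by $\sqrt{2}$ to give $2\mu$. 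Your approach bounds each player's best-response gap separately by $\mu\phi\le\mu$ and sums, which is more self-contained and bypasses Cauchy--Schwarz entirely. The paper's route has the advantage of reusing an inequality it already needs for Lemma~\ref{le:relation_of_3_points} and yielding the intermediate, distance-dependent bound; your route has the advantage of being a standalone two-line argument that does not rely on anything from Appendix~\ref{app:proof_of_le:relation_of_3_points}. Your closing remark about the EFG case is reasonable caution, but the theorem as stated lives in the NFG setting of Section~5, so the simplex bound $\phi\le 1$ is all that is needed.
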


\begin{proof}
    By Equation~\eqref{eq:exp_saddle}, the exploitability is bounded as:
    \[
        \epsilon(\sigma^{*,n}) \leq \mu \|\sigma^* - \sigma^{*,n}\|_2 \cdot \|\sigma^{*,n} - \sigma^{r,n}\|_2.
    \]
    Since \(\|\sigma - \sigma'\|_2 \leq \sqrt{2}\) for all \(\sigma, \sigma' \in \Delta\), we have:
    \[
        \epsilon(\sigma^{*,n}) \leq \mu \cdot \sqrt{2} \cdot \sqrt{2} = 2\mu.
    \]
\end{proof}
This distance is proportional to \(\mu\). Moreover, \(\mu\) also controls the convergence rate of the last strategy toward the saddle point of the SCCP:

\begin{theorem}
    Given any reference strategy \(\sigma^{r,n}\) in the \(n\)-th SCCP, the convergence rate of the last strategy to the saddle point in RTRMs is proportional to \(\mu\).
    \label{theorem: mu converge rate}
\end{theorem}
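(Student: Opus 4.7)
The plan is to exploit the $\mu$-strong monotonicity of the operator associated with the SCCP and feed it through the same OMD reformulation of RTRMs that was used to prove Theorem~\ref{thm:best_iterate_convergence_of_rtrms}. With $\phi = \tfrac{1}{2}\|\cdot\|_2^2$, the $n$-th SCCP is the variational inequality with operator $F_\mu(\sigma) = F(\sigma) + \mu(\sigma - \sigma^{r,n})$, where $F$ comes from the underlying bilinear game and is only monotone. The added identity term makes $F_\mu$ globally $\mu$-strongly monotone; this single fact is the engine that upgrades the best-iterate $O(1/\sqrt{T})$ guarantee to a rate that scales with $\mu$.

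First, I would recast the RTRMs dynamics of Equation~\eqref{eq:rt_cfv} in OMD form, following the equivalence invoked in the proof of Theorem~\ref{thm:best_iterate_convergence_of_rtrms}. Next, I would carry out the standard one-step analysis at the target point $\sigma^{*,n}$: writing the first-order optimality condition for the prox update and combining it with $\mu$-strong monotonicity of $F_\mu$ and Lipschitz continuity of $F_\mu$ (with constant $L = \|U\|_2 + \mu$), the Lyapunov function $V^t := \tfrac{1}{2}\|\sigma^{t,n} - \sigma^{*,n}\|_2^2$ satisfies a one-step inequality of the form
\[
V^{t+1} \;\le\; (1 - c\,\mu)\, V^t,
\]
for some $c>0$ depending only on $\|U\|_2$ and the effective step size induced by the RM/RM+ projection. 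Iterating gives $\|\sigma^{t,n} - \sigma^{*,n}\|_2^2 \le (1 - c\mu)^t \,\|\sigma^{1,n} - \sigma^{*,n}\|_2^2$, and hence the asymptotic contraction rate is proportional to $\mu$, as claimed. A parallel continuous-time intuition is the Poincaré-style Lyapunov argument of \citet{perolat2021poincare}, where $\dot V \le -\mu V$ produces exponential decay at rate $\mu$; the discrete step above is the natural analogue.

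The main obstacle is that RTRMs is not literally a fixed-stepsize OMD step: the RM/RM+ update defines its stepsize implicitly through the cumulative-regret normalization, and that stepsize is state-dependent. Extracting a clean contraction from the one-step inequality therefore requires tracking how this induced stepsize interacts with $\mu$, in particular because the Lipschitz constant of $F_\mu$ itself grows with $\mu$, so for very large $\mu$ the contraction will saturate and the ``proportional to $\mu$'' statement should be read as pertaining to the practically relevant regime $\mu \lesssim \|U\|_2$. A secondary technical subtlety is that for plain RM (unlike RM+) the cumulative regret is not kept non-negative, so the contraction must be established in the regret space first and then transferred back to the strategy simplex through the strategy–regret map, mirroring the reduction already used in Theorem~\ref{thm:best_iterate_convergence_of_rtrms}.
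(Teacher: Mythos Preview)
Your proposal takes a substantially different route from the paper and aims for a stronger conclusion than what the paper actually proves. The paper does \emph{not} establish a contraction $V^{t+1}\le(1-c\mu)V^t$ or any form of linear (geometric) convergence within an SCCP. Instead, it simply revisits the best-iterate bound already obtained in Theorem~\ref{thm:best_iterate_convergence_of_rtrms}, namely $\|\sigma^{*,n}-\sigma^{t,n}\|_2\le\sqrt{2C_2/(C_1T)}$, and tracks how the constants $C_0,C_1,C_2$ depend on $\mu$. After choosing $\eta=\mu/C_0^2$ and expanding, the bound becomes $\sqrt{2/T}\cdot\sqrt{A/\mu^2 + B/\mu + C}$ for game-dependent constants $A,B,C$; the ``proportional to $\mu$'' claim is therefore the observation that the prefactor in the $O(1/\sqrt{T})$ rate shrinks as $\mu$ grows (until it plateaus at $\sqrt{C}$), not a statement about the exponent or about last-iterate contraction.

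Your strong-monotonicity route is the natural thing to try and would yield a better result if it went through, but the key step $V^{t+1}\le(1-c\mu)V^t$ is asserted rather than derived, and precisely the obstacle you flag is where it breaks: the RM/RM$+$ update is not a fixed-$\eta$ prox step on the simplex, and the cumulative-regret normalization makes the effective stepsize state-dependent and potentially vanishing. The paper's own machinery (via the OMD equivalence) only extracts a summable bound $\sum_t C_1 D_\psi(\sigma^{*,n},\sigma^{t,n})\le C_2$, hence best-iterate $O(1/\sqrt{T})$; upgrading this to a per-step contraction would require a genuinely new argument controlling the implicit stepsize uniformly from below, which neither you nor the paper provides. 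So the gap is real: the contraction you rely on is not established, and the theorem as stated in the paper should be read as the much weaker claim about how the constant in the existing sublinear rate scales with $\mu$.
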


We leave the proof in Appendix \ref{app: proof_of_thm:mu_convergence_rate}. Theorems \ref{theorem: mu bound} and \ref{theorem: mu converge rate} imply that a balanced approach for \(\mu\) is crucial in practice. This leads to the adaptation of increasing \(\mu\) when the reference strategy is a close approximation of the NE of the original game (as indicated by low exploitability) to accelerate convergence. Conversely, \(\mu\) should be decreased to encourage exploration toward the NE of the original game when the algorithm is trapped for a long time in local optimization within the SCCP. The adaptation is shown below by adjusting an adaptive weight for the RT term:

\begin{align}
    \ell_i^{t,n} &= -U \sigma^t_{-i} + w \cdot \mu (\sigma_i^{t,n} - \sigma_i^{r,n})
    \label{adaptive_cfv} \\
w &= \begin{cases}
        >1, & \text{if } \epsilon(\sigma^{r, n}) < \epsilon(\sigma^{r, n-1}) \\
        \leq 1, & \text{otherwise}
    \end{cases}
\end{align}

The choice of \(w\) should adapt to different games to achieve optimal performance. A moderate choice, such as \(w = 2\) for acceleration and \(w = 0.5\) for exploration, has proven effective in many settings.

\subsection{Adaptive Regret Weight}
\label{sec:adaptive_regret_weight}

To address the final limitations observed in Section~\ref{sec:bottleneck}, we propose an adaptive regret weight to enhance the convergence of cumulative regret in RM-like algorithms. Specifically, we identify that these algorithms lack a forgetfulness mechanism to accelerate last-iterate convergence. This issue, also noted in FTRL \citep{cai2024fast}, remains unaddressed, despite RM being equivalent to FTRL under certain conditions \citep{farina2021faster}. Inspired by the discounted weight approach in \citet{brown2019solving}, we introduce a method to efficiently mitigate this problem.

Consider a single-agent example from \citet{brown2019solving}, where an agent selects among three actions with utilities \(\ell = (1, 0, -10^6)\). The game has a pure NE at \(\sigma^* = (1, 0, 0)\). By Theorem 1 in \citet{cai2023last}, the original CFR+ dynamics (Equations~\eqref{v}--\eqref{x}) achieve last-iterate convergence, equivalent to RTRM+ without the RT-term. Starting with a uniform strategy \(\sigma^1 = (1/3, 1/3, 1/3)\), the first iteration yields regrets \(R^1 = r^1 = (333334, 333333, 0)\), updating the strategy to \(\sigma^2 \approx (0.5, 0.5, 0)\). Using uniform regret weights, it takes \(T = 471,407\) iterations for the cumulative regret to reach \(R^T = (4.71 \times 10^5, 6.81 \times 10^{-2}, 0)\), at which point the first action is chosen with near-unit probability. This slow convergence indicates that the cumulative regret remains trapped in suboptimal strategies (e.g., \(\sigma^1\)) for an extended period, despite immediate regrets favoring the first action. A forgetfulness mechanism is thus necessary to prioritize recent regret information.

The RM dynamics (Equations~\eqref{v}--\eqref{x}) are driven by cumulative regret. A natural approach is to introduce a non-decreasing weight sequence for the immediate regret sequence \(\{\hat{r}^1, \dots, \hat{r}^T\}\), where \(\hat{r}^t = w^t r^t\) and \(w^t \in (0, 1]\) satisfies \(w^i \leq w^j\) for all \(i \leq j\). The strategy update follows \(\sigma^{t+1} \propto \hat{R}^t = \sum_{k=1}^t w^k r^k\). Since \(r^t \propto 1 / \|\sigma^t - \sigma^*\|\), the immediate regret approaches zero as \(\sigma^t\) nears the saddle point.\footnote{At the saddle point, for a mixed strategy, all action regrets are zero; for a pure strategy, only the action with probability one has zero regret, while others have negative regrets. RM updates consider only the positive portion of the cumulative regret.} 

Additionally, RM dynamics (Equation~\eqref{x}) update strategies based solely on positive cumulative regret, ignoring negative regret. This is evident in RM+ outperforming RM, as it implicitly prioritizes positive regrets, though this may lead to worse regret bounds in some cases \citep{burch2019revisiting}. A refined weighting scheme, inspired by DCFR \citep{brown2019solving}, applies distinct weights to positive and negative cumulative regrets using parameters \(\alpha \geq \beta\):

\begin{equation}
\label{eq:dcfr}
\begin{aligned}
R_+ &= \frac{t^\alpha}{t^\alpha + 1} \cdot [R]_+, \\
R_- &= \frac{t^\beta}{t^\beta + 1} \cdot [R]_-.
\end{aligned}
\end{equation}

Thus, the weight is \(w^t = \prod_{k=t}^{T-1} \frac{k^z}{k^z + 1}\), where \(z \in \{\alpha, \beta\}\), ensuring non-decreasing weights. Discounted RM (DRM) generalizes RM algorithms by varying \(\alpha\) and \(\beta\). For instance, RM corresponds to \((\alpha, \beta) = (+\infty, +\infty)\), RM+ to \((\alpha, \beta) = (+\infty, -\infty)\), and Linear RM (LRM) to \((\alpha, \beta) = (1, 1)\).

As established in Theorem~\ref{thm:best_iterate_convergence_of_rtrms}, RTDRM guarantees convergence without improving the best-iterate rate. However, it serves as an effective forgetfulness mechanism to mitigate historical errors. For example, with \(\alpha = \beta = 1\), the single-agent example requires only \(T = 970\) iterations to select the first action, significantly faster than the original RM+. This alignment accelerates convergence toward the NE. Our experiments in Appendix~\ref{app:rtdcfr_test} evaluate parameter settings such as \((\alpha, \beta) \in \{1.5, 2\} \times \{-\infty, 0, 0.5\}\), confirming that appropriate parameters significantly improve performance.

\section{CFRs in the RT Framework for EFGs}
\label{sec:rtcfrs}
In this section, we explore the application of the RT framework to CFR for solving EFGs, termed RTCFRs. We demonstrate that RTCFRs exhibit last-iterate convergence, a desirable property for iterative solvers.

\subsection{Challenges with Dilated Regularization}
\label{subsec:dilated_difficulty}
In the work of \citet{meng2023efficient}, RTCFR+ defines the \(n\)-th SCCP using sequence-form strategies \(q \in \mathcal{Q}\) as:
\begin{equation}
    \min_{q_1 \in \mathcal{Q}_1} \max_{q_2 \in \mathcal{Q}_2} -q_1^\top U q_2 + \mu D_{\psi}(q_1, q_1^{r,n}) - \mu D_{\psi}(q_2, q_2^{r,n}),
    \label{eq:rtcfr_plus}
\end{equation}
where \(\mu > 0\) is a regularization parameter, and \(D_{\psi}\) is a Bregman divergence term based on the dilated Euclidean squared norm \citep{hoda2010smoothing,kroer2015faster,farina2025better}:
\[
    \psi(q_i) = \sum_{I \in \mathcal{I}_i} \beta_I \frac{q_i(pI)}{2} \sum_{a \in A(I)} \frac{q_i(Ia)}{q_i(pI)},
\]
with \(\beta_I\) as a dilated weight for information set \(I \in \mathcal{I}_i\). The counterfactual loss is defined as:
\begin{equation}
    \hat{\ell}_i^t(I,a) = \ell_i^t(I,a) + \sum_{I' \in C(I,a)} \frac{q_i^t(Ia)}{q_i^t(pI)} \left\langle \hat{\ell}_i^t(I'), \frac{q_i^t(I')}{q_i^t(Ia)} \right\rangle,
    \label{eq:counterfactual_loss}
\end{equation}
where \(\ell_i^t(I,a)\) is given by:
\begin{align}
    \ell_i^t(I,a) &= \langle U_i, q_{-i}^t \rangle(Ia) + \mu \beta_I \left( \frac{q_i(Ia)}{q_i(pI)} - \frac{q_i^r(Ia)}{q_i^r(pI)} \right) \notag\\
    &\quad- \mu \sum_{I' \in C(I,a)} \sum_{a' \in A(I')} \frac{\beta_{I'}}{2} \left( \frac{(q_i(I'a'))^2}{(q_i(Ia))^2} - \frac{(q_i^r(I'a'))^2}{(q_i^r(Ia))^2} \right).
    \label{eq:counterfactual_loss_term}
\end{align}
    
The term involving the double summation in Equation~\eqref{eq:counterfactual_loss_term} introduces significant complexity, hindering the proof of convergence for RTCFR+ \citep[Appendix F]{meng2023efficient}. Consequently, \citet{meng2023efficient} omits this term for implementation simplicity, without providing a convergence guarantee. We observe that dilated regularization is less suitable for CFR in the counterfactual framework, as it updates strategies at ancestor information sets based on optimized strategies and utilities of descendant sets, which is not required by standard CFR.

\subsection{RTCFRs with Laminar Regret Decomposition}
\label{subsec:laminar_decomposition}
To address these challenges in Section~\ref{subsec:dilated_difficulty}, we propose RTCFRs using the laminar regret decomposition introduced by \citet{farina2019online}. RTCFRs optimize each information set independently using RTRMs as the solver. The update scheme proceeds in a bottom-up fashion, incorporating the RT-term into the counterfactual regret computation. For a subtree \(\triangle_I\) rooted at an information set \(I \in \mathcal{I}_i\) for player \(i\), the subtree strategy \(\sigma_i(\triangle_I) \in \Sigma_i(\triangle_I) \subseteq \Sigma_i\) forms a convex and compact set. The subtree value, incorporating the RT-term, is defined as:
\begin{equation}
    \hat{V}_{\triangle_I}^t(\sigma_i^t(\triangle_I)) = V_{\triangle_I}^t(\sigma_i^t(\triangle_I)) + \sum_{I' \in \triangle_I} \hat{\pi}(I \to I') \mu D_{\psi}(\sigma_i^t(I'), \sigma_i^r(I')),
    \label{eq:subtree_value_rt}
\end{equation}
where \(\hat{\pi}(I \to I') = \frac{\hat{q}_i(pI')}{\hat{q}_i(pI)}\) denotes the reach probability from the subtree root \(I\) to an information set \(I' \in \triangle_I\) under the saddle-point strategy, which does not require computation in the counterfactual framework. The subtree value \(V_{\triangle_I}^t\) is given by:
\begin{equation}
    V_{\triangle_I}^t(\sigma_i^t(\triangle_I)) = \langle g_i^t(I), \sigma_i^t(I) \rangle + \sum_{a \in A(I)} \sum_{I' \in C(I,a)} \sigma_i^t(I,a) V_{\triangle_{I'}}^t(\sigma_i^t(\triangle_{I'})),
    \label{eq:subtree_value}
\end{equation}
where \(g_i^t(I) = \langle -U_i(I), q_{-i}^t \rangle\) represents the observed loss at information set \(I\), with \(U_i(I) \in \mathbb{R}^{|A(I)| \times |\mathcal{S}_{-i}|}\) as the utility matrix restricted to \(I\). Note that \(V_{\triangle_{I'}}^t\) is treated as a constant in Equation~\eqref{eq:subtree_value}, as updates in the counterfactual framework are independent across information sets, and \(\sigma_i^t(\triangle_{I'})\) is fixed in the current iteration. This contrasts with dilated methods, which depend on the updated subtree strategy \(\sigma_i^{t+1}(\triangle_{I'})\), despite both using bottom-up updates.

The value at each information set \(I\) depends only on the local strategy \(\sigma_i(I) \in \Delta^{|A(I)|}\):
\begin{equation}
    \hat{V}_{I}^t(\sigma_i(I)) = \langle g_i^t(I), \sigma_i(I) \rangle + \mu D_{\psi}(\sigma_i(I), \sigma_i^r(I)) + \sum_{a \in A(I)} \sum_{I' \in C(I,a)} \sigma_i(I,a) V_{\triangle_{I'}}^t(\sigma_i^t(\triangle_{I'})).
    \label{eq:laminar_value}
\end{equation}
The counterfactual loss corresponds to the gradient of Equation~\eqref{eq:laminar_value}, as implemented in lines 5, 7 and 16 of Algorithm~\ref{alg:adaptive-rtcfr}:
\begin{equation}
    \ell_i^t(I) = g_i^t(I) + \mu (\sigma_i(I) - \sigma_i^r(I)) + \Bigg( \sum_{I' \in C(I,a)} V_{\triangle_{I'}}^t(\sigma_i^t(\triangle_{I'}))\Bigg)_{a\in A(I)}.
    \label{eq:counterfactual_loss_gradient}
\end{equation}

The regret with the RT term at subtree \(\triangle_I\) is:
\begin{align}
    \hat{r}_{\triangle_I}^t 
    &= \hat{V}_{\triangle_I}^t(\sigma_i^t(\triangle_I)) - \min_{\hat{\sigma}_i(\triangle_I)} \hat{V}_{\triangle_I}^t(\hat{\sigma}_i(\triangle_I))  \label{eq:min_subtree_value}\\
    &\stackrel{\eqref{eq:subtree_value_rt}}{=} V_{\triangle_I}^t(\sigma_i^t(\triangle_I)) -\min_{\hat{\sigma}_i(\triangle_I)}\Bigg\{ \langle g_i^t(I),\hat{\sigma}_i(I)\rangle+\sum_{a\in A(I)}\sum_{I'\in C(I,a)}\hat{\sigma}_i(I,a)V_{\triangle_{I'}}^t(\hat{\sigma}_i(\triangle_{I'})) \notag\\
    &\quad + \sum_{\dot{I}\in \triangle_I}\frac{\hat{q}_i(p\dot{I})}{\hat{q}_i(pI)}\mu (D_{\psi}(\hat{\sigma}_i(\dot{I}),\sigma_i^r(\dot{I}))-D_{\psi}(\sigma_i^t(\dot{I}),\sigma_i^r(\dot{I})))\Bigg\} \notag\\
    &= V_{\triangle_I}^t(\sigma_i^t(\triangle_I)) + \mu D_{\psi}(\sigma_i^t(I),\sigma_i^r(I))-\min_{\hat{\sigma}_i(I)\in \Delta^{|A(I)|}}\Bigg\{ \langle g_i^t(I),\hat{\sigma}_i(I)\rangle+\mu D_{\psi}(\hat{\sigma}_i(I),\sigma_i^r(I)) \notag\\
    &\quad+\sum_{a \in A(I)} \sum_{I' \in C(I,a)} \hat{\sigma}_i(I,a) \min_{\hat{\sigma}_i(\triangle_{I'})} \bigg(V_{\triangle_{I'}}^t(\hat{\sigma}_i(\triangle_{I'})) \notag\\
    &\quad + \sum_{\dot{I}\in \triangle_{I'}}\frac{\hat{q}_i(p\dot{I})}{\hat{q}_i(pI')}\mu (D_{\psi}(\hat{\sigma}_i(\dot{I}),\sigma_i^r(\dot{I}))-D_{\psi}(\sigma_i^t(\dot{I}),\sigma_i^r(\dot{I})))\bigg) \Bigg\} \label{eq:31} \\
    &\stackrel{(\ref{eq:subtree_value_rt}, \ref{eq:laminar_value})}{=}\hat{V}_{I}^t(\sigma_i^t(I)) - \min_{\hat{\sigma}_i(I) \in \Delta^{|A(I)|}} \left\{ \hat{V}_{I}^t(\hat{\sigma}_i(I)) - \sum_{a \in A(I)} \sum_{I' \in C(I,a)} \hat{\sigma}_i(I,a) \min_{\hat{\sigma}_i(\triangle_{I'})}\hat{V}_{\triangle_{I'}}^t(\hat{\sigma}_i(\triangle_{I'})) \right\} \notag\\
    &\stackrel{\eqref{eq:min_subtree_value}}{=} \hat{V}_{I}^t(\sigma_i^t(I)) - \min_{\hat{\sigma}_i(I) \in \Delta^{|A(I)|}} \left\{ \hat{V}_{I}^t(\hat{\sigma}_i(I)) - \sum_{a \in A(I)} \sum_{I' \in C(I,a)} \hat{\sigma}_i(I,a) \hat{r}_{\triangle_{I'}}^t \right\},
    \label{eq:laminar_regret}
\end{align}
Equation~\eqref{eq:laminar_regret} implies that regret minimization in RTCFRs can be performed independently at each information set \(I\):
\begin{equation}
    \hat{r}^t(I) = \hat{V}_{I}^t(\sigma_i^t(I)) - \min_{\hat{\sigma}_i(I) \in \Delta^{|A(I)|}} \hat{V}_{I}^t(\hat{\sigma}_i(I)).
    \label{eq:info_set_regret}
\end{equation}
The overall regret is bounded by a weighted sum of per-information-set regrets:
\begin{lemma}[Adapted from Theorem 2 in \citet{farina2019online}]
    \label{lem:regret_bound}
    The regret on \(\Sigma\) satisfies:
    \[
        r^t \leq \max_{\hat{\sigma} \in \Sigma} \sum_{I \in \mathcal{I}} \hat{q}_i(pI) \hat{r}^t(I),
    \]
    where \(\hat{q}_i(pI)\) is the sequence-form strategy for \(\hat{\sigma}_i\), representing the reach probability to information set \(I\) for player \(i = P(I)\).
\end{lemma}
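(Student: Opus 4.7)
The plan is to proceed by induction on the subtree structure in a bottom-up fashion, using the recursive identity established in Equation~\eqref{eq:laminar_regret}. Concretely, I would prove the stronger per-subtree claim that for every information set $I \in \mathcal{I}_i$,
\[
\hat{r}_{\triangle_I}^t \leq \max_{\hat{\sigma}_i(\triangle_I) \in \Sigma_i(\triangle_I)} \sum_{I' \in \triangle_I} \hat{\pi}(I \to I') \, \hat{r}^t(I'),
\]
where $\hat{\pi}(I \to I') = \hat{q}_i(pI')/\hat{q}_i(pI)$ is the reach probability from $I$ to $I'$ under the maximizing strategy $\hat{\sigma}_i$. Evaluating this bound at the root of the game tree, where the reach probability from the root to any $I$ equals $\hat{q}_i(pI)$, then yields the stated inequality $r^t \leq \max_{\hat{\sigma} \in \Sigma} \sum_I \hat{q}_i(pI)\, \hat{r}^t(I)$.

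For the base case, take $I$ to be a leaf information set, so $C(I,a) = \emptyset$ for every $a \in A(I)$. The recursive expression in Equation~\eqref{eq:laminar_regret} then collapses to $\hat{r}_{\triangle_I}^t = \hat{V}_I^t(\sigma_i^t(I)) - \min_{\hat{\sigma}_i(I)} \hat{V}_I^t(\hat{\sigma}_i(I)) = \hat{r}^t(I)$, matching the hypothesis trivially. For the inductive step, assume the hypothesis holds for every child subtree $\triangle_{I'}$ with $I' \in C(I,a)$. Rewriting Equation~\eqref{eq:laminar_regret} as a maximum and using the elementary bound $\hat{V}_I^t(\sigma_i^t(I)) - \hat{V}_I^t(\hat{\sigma}_i(I)) \leq \hat{r}^t(I)$ for any feasible $\hat{\sigma}_i(I)$, then substituting the induction hypothesis in place of each $\hat{r}_{\triangle_{I'}}^t$, gives
\[
\hat{r}_{\triangle_I}^t \leq \max_{\hat{\sigma}_i(I)} \biggl\{ \hat{r}^t(I) + \sum_{a \in A(I)} \sum_{I' \in C(I,a)} \hat{\sigma}_i(I,a) \max_{\hat{\sigma}_i(\triangle_{I'})} \sum_{I'' \in \triangle_{I'}} \hat{\pi}(I' \to I')\, \hat{r}^t(I'') \biggr\}.
\]
Because the inner maxima over $\hat{\sigma}_i(\triangle_{I'})$ act on disjoint, independent factors of $\Sigma_i(\triangle_I)$, they can be absorbed into the outer maximum as a single joint maximum over $\hat{\sigma}_i(\triangle_I)$. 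Invoking the chain rule $\hat{\sigma}_i(I,a) \cdot \hat{\pi}(I' \to I'') = \hat{\pi}(I \to I'')$, valid for $I' \in C(I,a)$ and $I'' \in \triangle_{I'}$, the nested sum collapses into $\sum_{I'' \in \triangle_I} \hat{\pi}(I \to I'')\, \hat{r}^t(I'')$, recovering the hypothesis at $I$.

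The main technical obstacle I anticipate is formally justifying the exchange of nested maxima, namely that independent maximization over each child subtree strategy is equivalent to a joint maximum over feasible subtree strategies in $\Sigma_i(\triangle_I)$. This hinges on the laminar, product-like structure of behavior-strategy spaces at EFG information sets under perfect recall: the local simplex $\Delta^{|A(I)|}$ together with the descendant subtree strategy sets combine via Cartesian product to form $\Sigma_i(\triangle_I)$, so individually feasible local choices are automatically jointly feasible. Once this factorization is made explicit, the remainder is routine bookkeeping using the reach-probability chain rule, and evaluating the final per-subtree bound at the game root produces the claimed inequality.
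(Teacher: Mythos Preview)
The paper does not supply its own proof of this lemma; it is stated as adapted from Theorem~2 of \citet{farina2019online} and is used as a black box. Your inductive argument on the subtree structure, driven by the recursive identity in Equation~\eqref{eq:laminar_regret} that the paper has already derived, is exactly the standard route to that result and is correct. The key ingredients you identify---the per-node bound $\hat{V}_I^t(\sigma_i^t(I)) - \hat{V}_I^t(\hat{\sigma}_i(I)) \le \hat{r}^t(I)$, the non-negativity of $\hat{\sigma}_i(I,a)$ when propagating the inductive hypothesis, the Cartesian factorization of $\Sigma_i(\triangle_I)$ permitting the merge of nested maxima, and the reach-probability chain rule---are precisely what the cited theorem uses. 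One cosmetic slip: in your displayed bound the inner summand should read $\hat{\pi}(I' \to I'')$ rather than $\hat{\pi}(I' \to I')$.
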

Thus, RTCFRs solve the \(n\)-th SCCP for each player \(i \in \{1, 2\}\) in EFGs as:
\begin{equation}
    q_i^{*,n} = \argmin_{q_i \in \mathcal{Q}_i} \sum_{I \in \mathcal{I}_i} q_i(pI) \hat{V}_I(\sigma_i(I)),
    \label{eq:sccp_efg}
\end{equation}
where \(\sigma_i(I) = \frac{q_i(Ia)}{q_i(pI)}\), and \(q_i(pI) \hat{V}_I(\sigma_i(I)) = 0\) if \(q_i(pI) = 0\).

We further establish that RTCFRs, leveraging RTRMs within information sets, achieve best-iterate convergence for the \(n\)-th SCCP in sequence-form strategies:
\begin{theorem}[Best-Iterate Convergence of RTCFRs in the \(n\)-th SCCP]
    \label{thm:best_iterate_convergence_of_RTCFRs}
    Given a reference strategy \(q^{r,n}\) and RT-term weight \(\mu\), let \(\{q^{t,n}\}\) be the sequence of strategies produced by RTCFRs, and let \(q^{*,n} \in \mathcal{Q}^{*,n}\) be the saddle point of the \(n\)-th SCCP. For any \(T \geq 1\), there exists \(0 < t \leq T\) such that:
    \[
        \|q^{*,n} - q^{t,n}\| = \sum_{I \in \mathcal{I}} q^{*,n}(pI) \|\sigma^{*,n}(I) - \sigma^{t,n}(I)\|_2 \leq O\left(\frac{1}{\sqrt{T}}\right).
    \]
\end{theorem}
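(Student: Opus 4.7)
The plan is to lift Theorem~\ref{thm:best_iterate_convergence_of_rtrms} from each information set up to the whole tree via the laminar decomposition of Section~\ref{subsec:laminar_decomposition}. By construction, RTCFRs run an independent copy of RTRMs at every $I \in \mathcal{I}$ driven by the counterfactual loss $\ell_i^t(I)$ of Equation~\eqref{eq:counterfactual_loss_gradient}. Moreover, the SCCP of Equation~\eqref{eq:sccp_efg} factors as a reach-probability-weighted sum of the local problems in Equation~\eqref{eq:laminar_value}, so its saddle point $q^{*,n}$ induces behavioral components $\sigma^{*,n}(I)$ that coincide with the local minimizer of $\hat{V}_I$ on $\Delta^{|A(I)|}$ whenever $q^{*,n}(pI) > 0$. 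This factorization is what permits a global distance bound to be assembled from local ones.

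First, I would strengthen Theorem~\ref{thm:best_iterate_convergence_of_rtrms} to the cumulative form that its OMD-based proof in Appendix~\ref{app:proof_of_thm:best} actually yields: for every $I \in \mathcal{I}$ there is a game-dependent constant $\tilde{C}_I$ with
\[
\sum_{t=1}^{T} \|\sigma^{*,n}(I) - \sigma^{t,n}(I)\|_2^2 \;\leq\; \tilde{C}_I.
\]
The constant depends on $\mu$, on $\sigma^{r,n}(I)$, and on a uniform bound on $\|\ell_i^t(I)\|_2$. Such a bound is available: $g_i^t(I)$ is controlled by $\|U\|_\infty$, the RT contribution $\mu(\sigma_i(I)-\sigma_i^r(I))$ is bounded by $\mu\sqrt{2}$, and the subtree values $V_{\triangle_{I'}}^t$ are bounded by $\|U\|_\infty$ times the tree depth since all strategies lie in simplices. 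Importantly, in the laminar scheme the subtree values enter $\ell_i^t(I)$ only as a time-varying \emph{exogenous} vector, and the OMD argument only requires boundedness rather than stationarity of the losses, so it is unaffected by the coupling across information sets.

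Next I would multiply the local cumulative bound by $q^{*,n}(pI)$, sum over $I$, and swap the two summations:
\[
\sum_{t=1}^{T} \sum_{I \in \mathcal{I}} q^{*,n}(pI)\,\|\sigma^{*,n}(I) - \sigma^{t,n}(I)\|_2^2 \;\leq\; \sum_{I \in \mathcal{I}} q^{*,n}(pI)\, \tilde{C}_I \;\leq\; \bar{C},
\]
with $\bar{C}<\infty$ because $q^{*,n}(pI) \leq 1$ and $\mathcal{I}$ is finite. By pigeonhole there exists $1 \leq t \leq T$ satisfying $\sum_I q^{*,n}(pI)\|\sigma^{*,n}(I) - \sigma^{t,n}(I)\|_2^2 \leq \bar{C}/T$, and Cauchy--Schwarz in the discrete measure weighted by $q^{*,n}(pI)$ then gives
\[
\sum_{I \in \mathcal{I}} q^{*,n}(pI)\,\|\sigma^{*,n}(I) - \sigma^{t,n}(I)\|_2 \;\leq\; \sqrt{\textstyle\sum_I q^{*,n}(pI)}\;\sqrt{\textstyle\sum_I q^{*,n}(pI)\,\|\sigma^{*,n}(I) - \sigma^{t,n}(I)\|_2^2} \;=\; O\!\left(\tfrac{1}{\sqrt{T}}\right),
\]
using $\sum_I q^{*,n}(pI) \leq |\mathcal{I}|$, which is the stated bound.

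The main obstacle will be the uniform loss bound on $\ell_i^t(I)$ in the presence of the coupling induced by the subtree values $V_{\triangle_{I'}}^t$. This is precisely where the laminar decomposition of Section~\ref{subsec:laminar_decomposition} pays off over the dilated formulation of Equation~\eqref{eq:counterfactual_loss_term}: the subtree value is treated as a constant in the gradient at $I$, so it contributes only a bounded additive vector rather than an additional quadratic coupling that would have to be controlled by induction from the leaves. A smaller bookkeeping obstacle is information sets with $q^{*,n}(pI) = 0$, which contribute nothing to the weighted norm and are absorbed by the convention already adopted in Equation~\eqref{eq:sccp_efg}.
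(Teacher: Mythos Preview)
Your proposal is correct and follows essentially the same approach as the paper: invoke the per-information-set RTRM bound from Appendix~\ref{app:proof_of_thm:best} and aggregate over $\mathcal{I}$ with reach-probability weights $q^{*,n}(pI)$. Your aggregation---retaining the cumulative form $\sum_t\|\cdot\|_2^2\le\tilde C_I$, summing over $I$, extracting a single common $t$ by pigeonhole, then applying Cauchy--Schwarz---is in fact slightly cleaner than the paper's, which passes directly to uniform constants $C_1^{\min},C_2^{\max}$ and bounds the recursively defined tree norm by $M_{\mathcal Q}$ without making the existence of a common $t$ across all information sets explicit.
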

The proof is provided in Appendix~\ref{app:proof_of_thm:best_iterate_convergence_of_RTCFRs}. The asymptotic last-iterate convergence of RTCFRs aligns with the results in Theorem~\ref{thm:asymptotic}.

\begin{theorem}[Asymptotic Last-Iterate Convergence of RTCFRs]
    \label{thm:asymptotic_last_iterate_converge_of_RTCFRs}
    Let \(G = (\mathcal{Q}_1, \mathcal{Q}_2, U)\) be an EFG, and let \(\{q^{*,1}, \dots, q^{*,n}\}\) be the saddle-point sequence across \(n\) SCCPs. Then, \(q^{*,n}\) is bounded, converges to the Nash Equilibrium (NE) \(q^* \in \mathcal{Q}^*\) of \(G\), and the strategy \(q^{t,n}\) produced by RTCFRs asymptotically converges to \(q^*\).
\end{theorem}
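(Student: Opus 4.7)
The plan is to mirror the argument used for Theorem~\ref{thm:asymptotic} (asymptotic last-iterate convergence of RTRMs) but lift it from the NFG simplex setting to the sequence-form treeplex $\mathcal{Q}$. The starting point is the triangle inequality
\[
\|q^{*} - q^{t,n}\| \leq \|q^{*} - q^{*,n}\| + \|q^{*,n} - q^{t,n}\|,
\]
where the norm is the weighted sequence-form norm introduced in Theorem~\ref{thm:best_iterate_convergence_of_RTCFRs}. The second term is already controlled: Theorem~\ref{thm:best_iterate_convergence_of_RTCFRs} gives a best-iterate bound of order $O(1/\sqrt{T})$, so along an appropriate subsequence in $t$ it vanishes as $T \to \infty$. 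The remaining work is entirely about the first term, i.e.\ showing that the sequence of SCCP saddle points $\{q^{*,n}\}$ converges to the NE $q^{*}$ of the original EFG.

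To handle the first term I would first establish an EFG analogue of Lemma~\ref{le:relation_of_3_points}. The key observation is that the SCCP in EFGs, written in the laminar form~\eqref{eq:sccp_efg}, is strongly convex--concave with respect to the Bregman divergence induced by the weighted sequence-form norm, once the RT-term $\mu D_\psi(\sigma_i(I),\sigma_i^r(I))$ is aggregated across information sets with the saddle-point reach weights $\hat{q}_i(pI)$. Using the first-order optimality of $q^{*,n}$ in~\eqref{eq:sccp_efg} together with the saddle-point inequality for the original bilinear problem~\eqref{eq: minmax}, I expect to obtain, in direct analogy with the NFG case, both
\[
\|q^{*} - q^{*,n}\| \leq \|q^{*} - q^{r,n}\|
\]
and the strict improvement bound
\[
\|q^{*} - q^{r,n}\| - \|q^{*} - q^{*,n}\| \geq \frac{\tilde{C}^{2}}{\mu^{2}\bigl(\|q^{*} - q^{r,n}\| + \|q^{*} - q^{*,n}\|\bigr)},
\]
for a constant $\tilde{C} > 0$ depending only on the game; here one exploits the metric-subregularity property of EFG saddle points (the treeplex version of Lemma~\ref{lemma:MS}).

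Once that lemma is in hand, the asymptotic convergence follows exactly as in the proof of Theorem~\ref{thm:asymptotic}. By the RT initialization scheme~\eqref{eq:initial_sccp} together with Assumption~\ref{as:sccp_converge}, we have $q^{r,n} = q^{*,n-1}$, so substituting gives a recursive inequality
\[
\|q^{*} - q^{*,n-1}\| - \|q^{*} - q^{*,n}\| \geq \frac{\tilde{C}^{2}}{\mu^{2}\bigl(\|q^{*} - q^{*,n-1}\| + \|q^{*} - q^{*,n}\|\bigr)}.
\]
The sequence $\{\|q^{*} - q^{*,n}\|\}$ is nonnegative and strictly decreasing, hence convergent by the monotone convergence theorem; the telescoping sum of the right-hand side must remain finite, which forces the limit to be $0$. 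Combined with Theorem~\ref{thm:best_iterate_convergence_of_RTCFRs}, this yields $\|q^{*} - q^{t,n}\| \to 0$ as $n,t \to \infty$, together with boundedness of $\{q^{*,n}\}$ (which lies in the compact treeplex $\mathcal{Q}$).

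The main obstacle I anticipate is the EFG analogue of Lemma~\ref{le:relation_of_3_points}. Unlike the NFG simplex, the treeplex $\mathcal{Q}$ carries the dependent, multi-level simplex constraints~\eqref{eq:q}, and the regularizer in~\eqref{eq:sccp_efg} is only \emph{locally} strongly convex in each behavioral strategy $\sigma_i(I)$, with reach-probability weights $q_i(pI)$ that can approach zero along the iterates. Propagating strong convexity and metric subregularity through this laminar weighting, and ensuring that the constant $\tilde{C}$ is uniform across SCCPs, is the genuinely non-trivial step. I would address this by restricting attention to information sets reachable under $q^{*,n}$ (where $q^{*,n}(pI) > 0$) and invoking the laminar regret decomposition of Lemma~\ref{lem:regret_bound} to reassemble a global inequality from the per-information-set strong convexity, paralleling the treatment in Appendix~\ref{app:proof_of_thm:best_iterate_convergence_of_RTCFRs}.
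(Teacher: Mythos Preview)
Your proposal is correct and follows essentially the same route as the paper: both split the error via the triangle inequality, control the inner-SCCP term with Theorem~\ref{thm:best_iterate_convergence_of_RTCFRs}, and then establish an EFG analogue of Lemma~\ref{le:relation_of_3_points} by working per information set (using the saddle-point optimality of $q^{*,n}$ at each $I$, cancelling the bilinear terms across the two players, and then reassembling a global strict-decrease inequality through the laminar/recursive structure of the weighted sequence-form norm), before finishing with the RT initialization $q^{r,n}=q^{*,n-1}$ and monotone convergence. Your anticipated obstacle and its resolution---deriving the inequality locally where $q^{*,n}(pI)>0$ and aggregating---is exactly how the paper proceeds; the only cosmetic difference is that the paper states the final decrement bound as $\|q^{*}-q^{r,n}\|-\|q^{*}-q^{*,n}\|\geq M_{\mathcal{Q}}\,C_{\min}$ with $C_{\min}$ a minimum over information sets, rather than in the ratio form you wrote.
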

\begin{proof}
    Let \(q^* \in \mathcal{Q}^*\) be the NE of \(G\), and let \(q^{*,n} \in \mathcal{Q}^{*,n}\) and \(q^{r,n}\) be the saddle point and reference strategy of the \(n\)-th SCCP, respectively. By the saddle-point property of Equation~\eqref{eq:sccp_efg}, for each information set \(I \in \mathcal{I}_i\):
    \begin{equation}
        q_i^{*,n}(pI) \langle \sigma_i^*(I), \ell_i^{*,(*,n)}(I) \rangle \geq q_i^{*,n}(pI) \langle \sigma_i^{*,n}(I), \ell_i^{*,n}(I) \rangle,
        \label{eq:laminar_saddle_property}
    \end{equation}
    where:
    \[
        \ell_i^{*,(*,n)}(I) = g_i^{*,n}(I) + \mu (\sigma_i^{*,n}(I) - \sigma_i^{r,n}(I)) + \Bigg(\sum_{I' \in C(I,a)} V_{\triangle_{I'}}^{*,n}(\sigma_i^*(\triangle_{I'}))\Bigg)_{a\in A(I)},
    \]
    and \(\ell_i^{*,n}(I) = g_i^{*,n}(I) + \mu (\sigma_i^{*,n}(I) - \sigma_i^{r,n}(I)) + \bigg( \sum_{I' \in C(I,a)} V_{\triangle_{I'}}^{*,n}(\sigma_i^{*,n}(\triangle_{I'}))\bigg)_{a\in A(I)}\), with \(g_i^{*,n}(I) = \langle -U_i(I), q_{-i}^{*,n} \rangle\) and \(V_{\triangle_I}^{*,n}(\sigma_i)=\langle\sigma_i,g_i^{*,n}\rangle+\sum_{a \in A(I)} \sum_{I' \in C(I,a)} \sigma_i(I,a) V_{\triangle_{I'}}^{*,n}(\sigma_i(\triangle_{I'}))\). Extending Equation~\eqref{eq:laminar_saddle_property}, we obtain:
    \begin{align}
        \langle q_i^*(\triangle_I), -U_i(\triangle_I) q_{-i}^{*,n} \rangle + q_i^{*,n}(pI)\langle \sigma_i^*(I), \mu (\sigma_i^{*,n}(I) &- \sigma_i^{r,n}(I)) \rangle \geq \langle q_i^{*,n}(\triangle_I), -U_i(\triangle_I) q_{-i}^{*,n} \rangle \notag\\
        &+ q_i^{*,n}(pI) \langle \sigma_i^{*,n}(I), \mu (\sigma_i^{*,n}(I) - \sigma_i^{r,n}(I)) \rangle.
        \label{eq:extended_saddle}
    \end{align}
    Summing over subtrees rooted at information sets \(I\) for players \(i \in \{1, 2\}\), we have:
    \[
        \sum_{i \in \{1, 2\}} \langle q_i^*(\triangle_I), -U_i(\triangle_I) q_{-i}^{*,n} \rangle \leq \sum_{i \in \{1, 2\}} \langle q_i^{*,n}(\triangle_I), -U_i(\triangle_I) q_{-i}^{*,n} \rangle,
    \]
    yielding, for each information set:
    \[
        q_i^{*,n}(pI) \langle \sigma_i^*(I), \mu (\sigma_i^{*,n}(I) - \sigma_i^{r,n}(I)) \rangle \geq q_i^{*,n}(pI) \langle \sigma_i^{*,n}(I), \mu (\sigma_i^{*,n}(I) - \sigma_i^{r,n}(I)) \rangle.
    \]
    Following the proof of Lemma~\ref{le:relation_of_3_points} in Appendix~\ref{app:proof_of_le:relation_of_3_points}, we derive:
    \[
    q_i^{*,n}(pI)(\|\sigma_i^{*}(I) - \sigma_i^{r,n}(I)\|_2 - \|\sigma_i^{*}(I) - \sigma_i^{*,n}(I)\|_2) \geq \frac{q_i^{*,n}(pI)C^2}{\mu^2 (\|\sigma_i^{*}(I) - \sigma_i^{r,n}(I)\|_2 + \|\sigma_i^{*}(I) - \sigma_i^{*,n}(I)\|_2)},
    \]
    define \(C_{\text{min}}=\min_{I\in \mathcal{I}}\frac{C^2}{\mu^2 (\|\sigma^{*}(I) - \sigma^{r,n}(I)\|_2 + \|\sigma^{*}(I) - \sigma^{*,n}(I)\|_2)}\), and
    \begin{align}
        \|q^{*}(\triangle_I) - q^{r,n}(\triangle_I)\|-\|q^{*}(\triangle_I) - q^{*,n}(\triangle_I)\| &= q^{*,n}(pI) \bigg( \|\sigma^{*}(I) - \sigma^{r,n}(I)\|_2 - \|\sigma^{*}(I) - \sigma^{*,n}(I)\|_2 \notag\\ 
        & \hspace*{-13em}  
         + \sum_{a \in A(I)} \sum_{I' \in C(I,a)} \frac{q^{*,n}(Ia)}{q^{*,n}(pI)} (\| (q^{*}(\triangle_{I'}) - q^{r,n}(\triangle_{I'})\|-\|q^{*}(\triangle_{I'}) - q^{*,n}(\triangle_{I'})\|) \bigg)\notag\\
        &\geq M_{Q_{\triangle_I}}C_{\text{min}}. \notag
    \end{align}
    where \(M_{Q_{\triangle_I}}=\max_{q\in \mathcal{Q}_{\triangle_I}} {\|q\|_1}\). For the entire strategy space \(q\in \mathcal{Q}\), we have:
    \begin{equation}
        \|q^{*} - q^{r,n}\|-\|q^{*}-q^{*,n}\| \geq M_{Q}C_{\text{min}}
    \end{equation}
    
    By the RT framework initialization and Assumption~\ref{as:sccp_converge}, substituting \(q^{r,n} = q^{*,n-1}\), the sequence \(\{\|q^{*}-q^{*,n}\|\}\) is strictly decreasing and non-negative, converging to zero. Thus, \(q^{*,n}\) converges to \(q^*\). Combined with Theorem~\ref{thm:best_iterate_convergence_of_RTCFRs}, which establishes that \(q^{t,n}\) converges to \(q^{*,n}\), we conclude that RTCFRs' strategies \(q^{t,n}\) asymptotically converge to the NE \(q^*\).
\end{proof}

\begin{algorithm}[tbh]
\caption{Adaptive RTCFRs Algorithm for Extensive-Form Games}
\begin{algorithmic}[1]
\Require Initial weight \(\mu > 0\), maximum iterations per SCCP \(T \in \mathbb{N}\)
\State Initialize local regret minimizers \(\mathcal{R}_I\): \(\sigma^1(I) \gets \frac{\mathbf{1}}{|A(I)|}\), \(R^0(I) \gets \mathbf{0}\), \(\forall I \in \mathcal{I}\)
\State Initialize reference strategy \(\sigma^r \gets \sigma^1\), minimum exploitability \(\epsilon_{\min} \gets \epsilon(\sigma^1)\), adaptive weight \(w \gets 1\), counter \(k \gets 0\)
\For{\(t = 1, 2, \ldots\)}
    \For{each player \(i \in \{1, 2\}\)}
        \State Compute counterfactual loss: \(\ell_i^t \gets \langle -\mathbf{U}, q_{-i}^t \rangle\) 
        \For{each information set \(I \in \mathcal{I}_i\) (bottom-up)}
            \State Compute counterfactual loss with RT-term: \(\hat{\ell}_i^t(I)\gets \ell_i^t(I)+w \mu (\sigma_i^t(I) - \sigma_i^r(I))\)
            \State Compute immediate regret: \(r_i^t(I) \gets \langle \sigma_i^t(I), \hat{\ell}_i^t(I) \rangle \mathbf{1} - \hat{\ell}_i^t(I)\)
            \State Update cumulative regret:
                \[
                R_i^t(I) \gets \begin{cases}
                    R_i^{t-1}(I) + r_i^t(I), & \text{if RM} \\
                    [R_i^{t-1}(I) + r_i^t(I)]_+, & \text{if RM+} \\
                    \frac{t^{\alpha}}{t^{\alpha} + 1} [R_i^{t-1}(I)+ r_i^t(I)]_+ + \frac{t^{\beta}}{t^{\beta} + 1} [R_i^{t-1}(I)+ r_i^t(I)]_-, & \text{if DRM}
                \end{cases}
                \]
            \State Compute positive regret: \(\theta^t(I) \gets [R_i^t(I)]_+\)
            \If{\(\theta^t(I) \neq \mathbf{0}\)}
                \State Update strategy: \(\sigma_i^{t+1}(I) \gets \frac{\theta^t(I)}{\|\theta^t(I)\|_1}\)
            \Else
                \State Set uniform strategy: \(\sigma_i^{t+1}(I) \gets \frac{\mathbf{1}}{|A(I)|}\)
            \EndIf
            \State Update parent counterfactual loss: \(\ell_i^t(pI) \gets \ell_i^t(pI) + \langle \sigma_i^t(I), \ell_i^t(I) \rangle\)
        \EndFor
        \State Increment counter: \(k \gets k + 1\)
    \EndFor
    \If{\(t \mod m = 0\)} \Comment{Check exploitability every \(m\) iterations}
        \If{\(\epsilon(\sigma^t) \leq \frac{\epsilon_{\min}}{2}\)} \Comment{Exploit phase}
            \State Update minimum exploitability: \(\epsilon_{\min} \gets \epsilon(\sigma^t)\)
            \State Update for next SCCP: \(\sigma^r \gets \sigma^t\), \(w \gets 2\), \(k \gets 0\)
        \ElsIf{\(\epsilon(\sigma^t) \leq \epsilon_{\min} \wedge k \geq T\)} \Comment{Keep phase}
            \State Update minimum exploitability: \(\epsilon_{\min} \gets \epsilon(\sigma^t)\)
            \State Update for next SCCP: \(\sigma^r \gets \sigma^t\), \(w \gets 1\), \(k \gets 0\)
        \ElsIf{\(k \geq 2T\)} \Comment{Explore phase}
            \State Update for next SCCP: \(\sigma^r \gets \sigma^t\), \(w \gets 0.5\), \(k \gets 0\)
        \EndIf
    \EndIf
\EndFor
\State \Return \(\sigma^{t+1}\)
\end{algorithmic}
\label{alg:adaptive-rtcfr}
\end{algorithm}
We introduce the Adaptive RTCFRs algorithm for EFGs in Algorithm~\ref{alg:adaptive-rtcfr}. The Adaptive RTRMs algorithm for NFGs is a special case of RTCFRs, where each player has a single information set. The adaptive mechanism, which enhances RTCFRs with linear last-iterate convergence, is detailed in lines 20--29 and operates in three phases:
\begin{itemize}
    \item \textbf{Exploit Phase} (lines 21--23): If the exploitability of the current strategy \(\sigma^t\) is at most half the minimum exploitability (\(\epsilon(\sigma^t) \leq \epsilon_{\min}/2\)), the algorithm adopts \(\sigma^t\) as the reference strategy (\(\sigma^r \gets \sigma^t\)) and transitions to a new SCCP with an aggressive adaptive weight \(w \gets 2\).
    \item \textbf{Keep Phase} (lines 24--26): If the exploitability satisfies \(\epsilon(\sigma^t) \leq \epsilon_{\min}\) and at least \(T\) iterations have elapsed in the current SCCP (\(k \geq T\)), the algorithm updates the reference strategy (\(\sigma^r \gets \sigma^t\)) and transitions to a new SCCP with a reset weight \(w \gets 1\).
    \item \textbf{Explore Phase} (lines 27--28): If no suitable reference strategy is identified after \(2T\) iterations (\(k \geq 2T\)), the algorithm enters an exploration phase with a conservative weight \(w \gets 0.5\).
\end{itemize}
Consequently, each SCCP comprises 1 to \(2T\) iterations. To mitigate the computational cost of exploitability calculations in large games, exploitability is evaluated every \(m\) iterations (\(m < T\), default \(m=1\)), as specified in line 20. This optimization preserves performance while improving computational efficiency.

\section{Experiments}\label{sec:experiments}
In our experiments, we evaluated our technique on NFG and EFG benchmarks, including matrix games, Kuhn poker, Leduc poker, Liar's dice, and Goofspiel. the performance is quantified by exploitability. All experiments were conducted on a PC with a 24-core CPU (up to 5.8 GHz) and 32 GB of memory, with most tests completing in under a few minutes. Detailed descriptions of the experiments can be found in the Appendix \ref{appendix: experiments}.

\subsection{Experiments on NFGs}
\label{sec:experiments_nfg}

In NFGs, we evaluated the performance of algorithms on a variety of randomly generated matrix games, where each player has 10 actions, and utility matrices were generated in the range \([-1, 1]\) using random seeds. We applied our adaptive techniques to RM-type algorithms, specifically Adp-RTRM+ (Adaptive RTRM+) and Adp-RTDRM (Adaptive RTDRM), and compared them with last-iterate convergence algorithms: RTRM+ \citep{meng2023efficient}, RTDRM, PRM+ (last) \citep{cai2023last}, OMWU \citep{lee2021last}, Reg-OMWU (Regularized OMWU) \citep{liu2022power}, and R-NaD (MWU with an RT term as described in \citealp{perolat2021poincare}).Additionally, we compared these methods with several average-iterate convergence algorithms, including RM+ and PRM+ (Predictive RM+) \citep{farina2021faster}, all of which utilized quadratic weighting to average the historical strategy. MWU-type algorithms employed entropy regularization. We set a uniform strategy as the initial strategy for all algorithms and employed an alternating update scheme, as it generally yielded better empirical performance. Further details on parameter selection and tuning can be found in Appendix \ref{appendix: exp setting}.

Figure \ref{fig:matrix_game} presents the results, demonstrating that our adaptive methods significantly enhance the convergence of the RTRM algorithms, and that discounted RM variants (DRM, RTDRM, and adaptive RTDRM) outperform RM+ types. Moreover, the effectiveness of the RT framework for last-iterate convergence is evident, as it outperforms other algorithms, including PRM and OMWU. Overall, the last-iterate convergence algorithms demonstrate faster convergence compared to averaging methods in this NFG setting, achieving a bound as low as \(10^{-16}\) and converging to a highly precise NE. 
\begin{figure}[ht]
    \centering
    \begin{subfigure}[b]{0.48\linewidth}
        \centering
        \includegraphics[width=\linewidth]{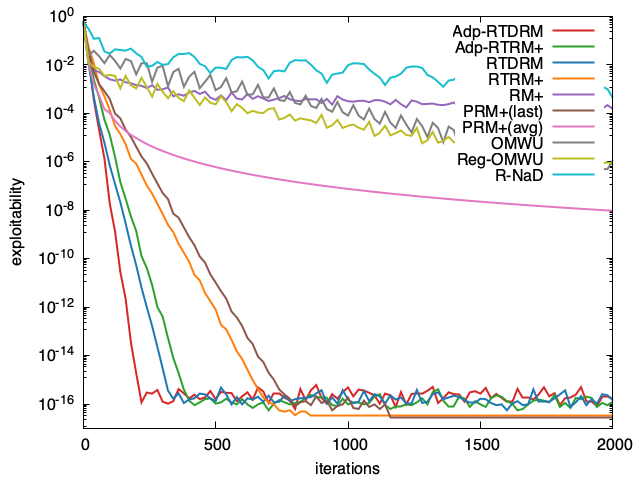}
        \caption{Matrix Game (Seed 0)}
        \label{fig:m-0}
    \end{subfigure}
    \hfill 
    \begin{subfigure}[b]{0.48\linewidth}
        \centering
        \includegraphics[width=\linewidth]{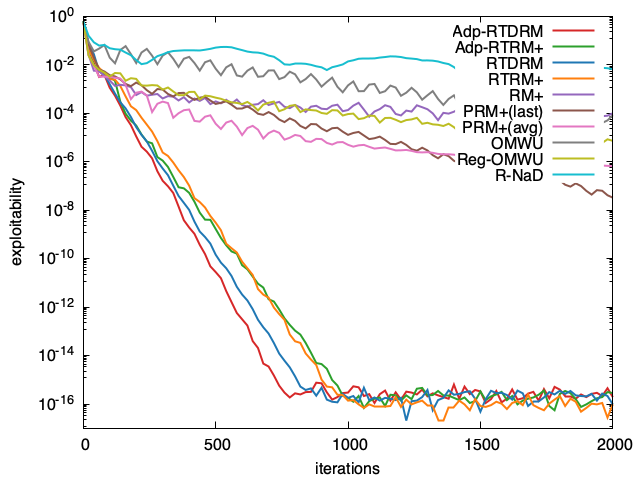}
        \caption{Matrix Game (Seed 1)}
        \label{fig:m-1}
    \end{subfigure}

    \begin{subfigure}[b]{0.48\linewidth}
        \centering
        \includegraphics[width=\linewidth]{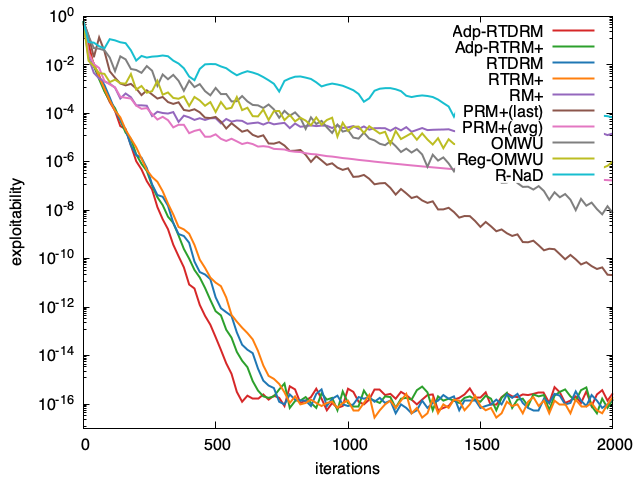}
        \caption{Matrix Game (Seed 2)}
        \label{fig:m-2}
    \end{subfigure}
    \hfill 
    \begin{subfigure}[b]{0.48\linewidth}
        \centering
        \includegraphics[width=\linewidth]{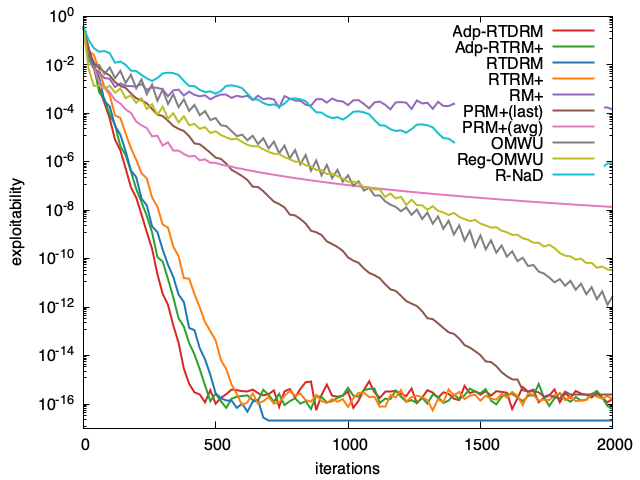}
        \caption{Matrix Game (Seed 3)}
        \label{fig:m-3}
    \end{subfigure}

    \caption{Results of the \(10 \times 10\) Matrix Game for seeds ranging from 0 to 3.}
    \label{fig:matrix_game}
\end{figure}

\subsection{Experiments on EFGs}
\label{sec:experiments_efg}
In EFGs, we applied our adaptive technique to CFR+ and DCFR, naming them Adp-RTCFR+ (Adaptive RTCFR+) and Adp-RTDCFR (Adaptive RTDCFR). We evaluated these methods on four EFG benchmark platforms: Kuhn poker, Leduc poker, Liar's dice, and Goofspiel. Detailed descriptions of these games are provided in Appendix \ref{appendix: game description}. We compared our methods with last-iterate convergence algorithms, including RTCFR+ \citep{meng2023efficient}, RTDCFR (RT Discounted CFR), PCFR+ (last) \citep{cai2023last}, DOMWU (Dilated OMWU, which employs dilated regularization weights in EFG \citealp{hoda2010smoothing, kroer2015faster, farina2021faster}), Reg-DOMWU (Regularized DOMWU) \citep{liu2022power}, and R-NaD \citep{perolat2021poincare}, as well as average-iterate convergence algorithms, including CFR+ \citep{bowling2015heads} and PCFR+ (Predictive CFR+) \citep{farina2021faster}. Similar to the NFG setting, we set a uniform strategy as the initial strategy for all algorithms and utilized an alternating update scheme. Details on parameter selection and tuning are available in Appendix \ref{appendix: exp setting}.
\begin{figure}[ht]
    \centering
    \begin{subfigure}[b]{0.48\linewidth}
        \centering
        \includegraphics[width=\linewidth]{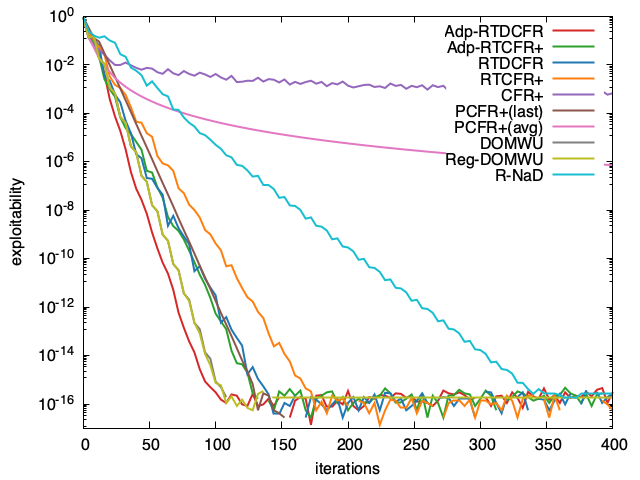}
        \caption{Kuhn Poker (3)}
        \label{fig:kuhn}
    \end{subfigure}
    \hfill 
    \begin{subfigure}[b]{0.48\linewidth}
        \centering
        \includegraphics[width=\linewidth]{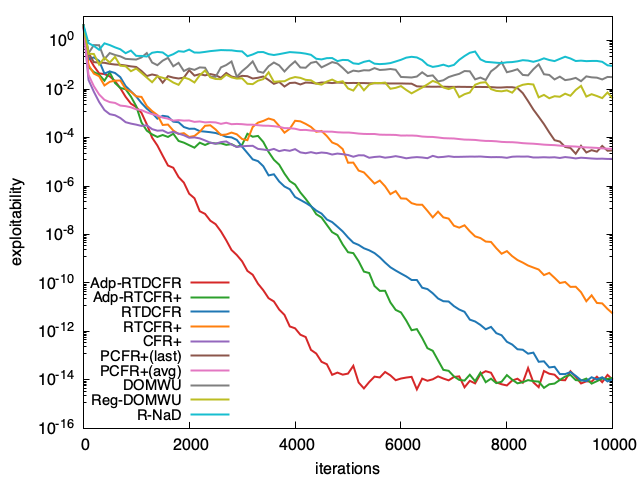}
        \caption{Leduc Poker (3)}
        \label{fig:leduc}
    \end{subfigure}

    \begin{subfigure}[b]{0.48\linewidth}
        \centering
        \includegraphics[width=\linewidth]{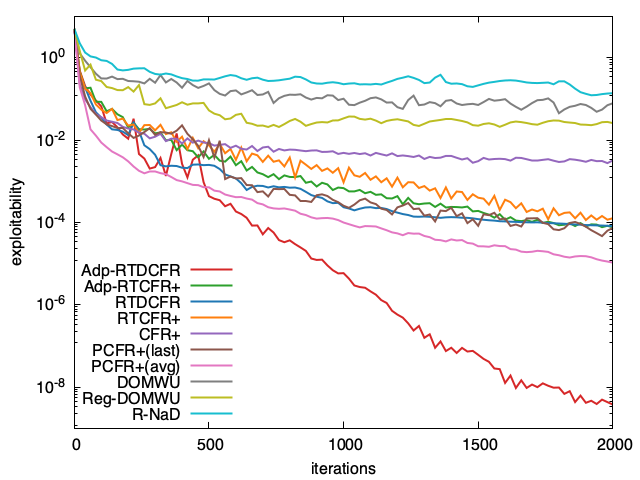}
        \caption{Goofspiel (4)}
        \label{fig:goofspiel}
    \end{subfigure}
    \hfill 
    \begin{subfigure}[b]{0.48\linewidth}
        \centering
        \includegraphics[width=\linewidth]{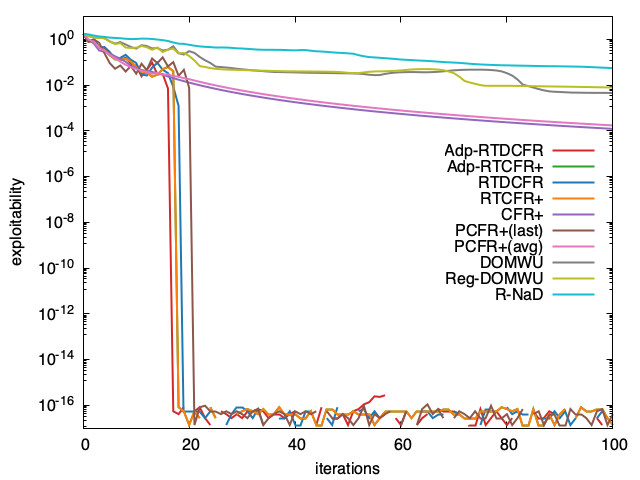}
        \caption{Liar's Dice (6)}
        \label{fig:liar}
    \end{subfigure}

    \caption{Results of EFGs. The numbers in parentheses denote the different rank settings in the respective games: Kuhn poker (3), Leduc poker (3), and Goofspiel (4) indicate the use of 3, 3, and 4 rank cards in poker games, respectively; Liar's dice (6) indicates use a 6-sided die.}
    \label{fig:efg}
\end{figure}

The results presented in Figure \ref{fig:efg} demonstrate that our adaptive method significantly improves the convergence rate, achieving the best performance across all games except for Liar's Dice. All versions of CFR converge to nearly exact NE within just 20 iterations, despite Liar's Dice being the largest-scale game. This is attributed to Liar's Dice (6) possessing a strict NE, where players have a dominant pure strategy, corroborating Theorem~1 in~\citet{cai2023last}. Specifically, the original CFR method also exhibits fast last-iterate convergence in such settings. Furthermore, in Liar's Dice, RTCFRs employ \( T = 1 \), which nullify the RT-term (\( \mu (\sigma^{r,n} - \sigma^{t,n}) = 0 \)) since \( \sigma^{r,n} = \sigma^{t,n} = \sigma^{T+1,n-1} \), reducing RTCFRs to the original CFRs.

Overall, last-iterate convergence methods exhibit faster convergence rates and lower convergence bounds compared to average-iterate convergence methods, especially in Kuhn poker. However, the ability to converge diminishes as game complexity increases, leading to fluctuations and instability, as observed in Leduc poker and Goofspiel. Our adaptive method effectively mitigates this instability by identifying suitable reference strategies and adjusting the regularization weights to balance rapid convergence with exploratory descent directions. This is particularly evident in Goofspiel, where adaptive RTDCFR outperforms other RT algorithms. In contrast, other algorithms exhibit varying performances—R-NaD fails to converge except in Kuhn poker, while PCFR+ shows satisfactory convergence in Goofspiel using the last strategy. These observations suggest an interesting direction for future research: combining the RT technique with prediction to investigate whether it can further enhance convergence performance.

\section{Conclusion and Future Work}

In this paper, we investigated the convergence of the last strategy in regret minimization methods for solving NFGs and EFGs with imperfect information, both theoretically and empirically, and proposed adaptive techniques to enhance this convergence. Initially, we proved that the RT framework enjoys asymptotic last-iterate convergence, although it is limited by certain parameters. To address these limitations, we introduced adaptive methods from three perspectives: selecting reference strategies based on exploitability metrics, balancing performance by adjusting the RT weight to control exploration and exploitation, and efficiently utilizing current regret with a discounting approach. Based on these methods, the proposed algorithm demonstrates significant improvements in empirical performance on both NFG and EFG platforms.

Research on last-iterate convergence has the potential to extend game-theoretic algorithms to deep reinforcement learning, facilitating the approximation of Nash equilibria in large games. Our initial studies focus on simplified games with assumptions of full feedback and rational players, which may not fully reflect real-world conditions. Therefore, we plan to extend our research to include scenarios with bandit feedback and irrational player behaviors in large games. Additionally, prior studies in the RT framework, such as MWU with diverse regularization techniques, motivate the exploration of novel regularization strategies for RM and CFR. Designing effective regularization methods and establishing their convergence properties remain significant open challenges, which we intend to address in subsequent research.


\acks{This research was funded in part by Guangdong Provincial Key Laboratory of Novel Security Intelligence Technologies (No. 2022B1212010005), National Natural Science Foundation of China General Program (No. 62376073), Colleges and Universities Stable Support Project of Shenzhen, China (No. GXWD20220811170225001), and Shenzhen Science and Technology Plan - Major Science and Technology Projects (No. KJZD20230923114213027).}


\newpage

\appendix
\section{Proof of Theorem \ref{thm:best_iterate_convergence_of_rtrms}}\label{app:proof_of_thm:best}
\begin{proof}
    The proof refers to the original work of \citet{meng2023efficient}. We use the connection between the Online Mirror Descent (OMD) formulation and Regret Matching (RM) algorithms to convert RMs into OMD \citep{farina2021faster}. By applying the OMD formulation, we can prove the convergence of RM. 

    First, recall the following lemma:

    \begin{lemma}[Modified from Theorem 2 in \citealt{farina2021faster}]
        The updates for the family of Regret Matching algorithms are equivalent to the following update:
        \begin{equation}
            \theta_i^{t+1,n} \in \argmin_{\theta_i \in \mathbb{R}^d} \left\{ \langle -r_i(\theta_i^{t,n}), \theta_i \rangle + \frac{1}{\eta} D_{\psi}(\theta_i, \theta_i^{t,n}) \right\},
        \end{equation}
        where \(\eta\) is a constant, \(d = |A_i|\), \(r(\theta_i^{t,n}) = \langle \ell_i^{t,n}, \sigma_i^{t,n} \rangle \mathbf{1} - \ell_i^{t,n}\), \(\ell_i^{t,n} = g_i^{t,n} + \mu(\sigma_i^{t,n} - \sigma_i^{r,n})\), \(g_i^{t,n} = -U \sigma_{-i}^{t,n}\) is the loss gradient at iteration \(t\) of the \(n\)-th SCCP, and \(\psi(\cdot)\) is the Euclidean square norm.
    \end{lemma}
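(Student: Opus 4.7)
The plan is to verify the stated equivalence by directly computing the unconstrained OMD update with the quadratic regularizer \(\psi(x)=\tfrac{1}{2}\|x\|_2^2\), and then matching the resulting recursion term by term against the RT-RMs dynamics \eqref{v}--\eqref{x} with the RT-modified loss \eqref{eq:rt_cfv}. Since \(D_\psi(\theta,\theta^{t,n})=\tfrac{1}{2}\|\theta-\theta^{t,n}\|_2^2\), the objective inside the argmin is smooth and strongly convex over the unconstrained domain \(\mathbb{R}^d\), so the first-order optimality condition yields the closed form \(\theta_i^{t+1,n}=\theta_i^{t,n}+\eta\,r_i(\theta_i^{t,n})\). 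Choosing \(\eta=1\) produces exactly the cumulative-regret recursion \(R_i^{t+1,n}=R_i^{t,n}+r_i^{t,n}\) of RM; the iterate \(\theta_i^{t,n}\) therefore plays the role of the cumulative regret \(R_i^{t,n}\).

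Next I would verify that the immediate regret used in the OMD formulation coincides with the immediate regret used by RTRMs. The definition \(r(\theta_i^{t,n})=\langle \ell_i^{t,n},\sigma_i^{t,n}\rangle\mathbf{1}-\ell_i^{t,n}\) is structurally identical to \eqref{r}, and substituting the RT-modified loss \(\ell_i^{t,n}=g_i^{t,n}+\mu(\sigma_i^{t,n}-\sigma_i^{r,n})=-U\sigma_{-i}^{t,n}+\mu(\sigma_i^{t,n}-\sigma_i^{r,n})\) recovers exactly \eqref{eq:rt_cfv}. What remains is to confirm that the strategy extraction step is consistent: the strategy \(\sigma_i^{t+1,n}\) must be read off from \(\theta_i^{t+1,n}\) by the standard RM rule \(\sigma_i^{t+1,n}=[\theta_i^{t+1,n}]_+/\|[\theta_i^{t+1,n}]_+\|_1\) (with the uniform fallback when the positive part vanishes), which matches \eqref{x}.

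For RM+, the equivalence requires restricting the OMD domain to the non-negative orthant \(\mathbb{R}^d_{\ge 0}\): in that case the minimization of \(\langle -r,\theta\rangle+\tfrac{1}{2}\|\theta-\theta^{t,n}\|_2^2\) becomes a Euclidean projection of \(\theta^{t,n}+r\) onto \(\mathbb{R}^d_{\ge 0}\), which is precisely the coordinatewise operator \([\cdot]_+\); this reproduces the RM+ branch of \eqref{R}. I would note this explicitly as a second instance of the same lemma, so that the statement covers both RM and RM+ (and, by the same projection argument combined with a scaling of \(\eta\), DRM-style updates as used later in the paper).

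The main technical obstacle is not algebraic but bookkeeping: one must make sure the OMD iterate \(\theta_i^{t,n}\) is identified with the right internal RM quantity (cumulative regret) rather than with the strategy itself, and that the RT-term is folded into the loss \(\ell_i^{t,n}\) before computing the immediate regret, so that the equivalence carries through unchanged from the non-RT setting of \citet{farina2021faster}. Once that identification is set up cleanly, the proof reduces to writing out the first-order condition and matching it line by line with \eqref{r}--\eqref{x}; the constant \(\eta\) simply rescales the cumulative regret without affecting the induced strategy, since normalization by the \(\ell_1\) norm in \eqref{x} is scale-invariant.
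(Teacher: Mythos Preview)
Your proposal is correct and follows essentially the same approach as the paper: compute the first-order optimality condition of the unconstrained quadratic OMD to obtain the closed form \(\theta^{t+1,n}=\theta^{t,n}+\eta\,r(\theta^{t,n})\), identify \(\theta\) with the cumulative regret so that this matches the RM recursion in \eqref{R}, and then handle RM+ by restricting the domain to \(\mathbb{R}^d_{\ge 0}\) (yielding the \([\cdot]_+\) projection) and DRM by a reweighting. The paper's own treatment is terser---it simply writes down the closed-form updates for RM, RM+, and DRM and asserts they coincide with the respective cumulative-regret recursions---so your version is, if anything, a slightly more explicit rendering of the same argument.
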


    The OMD has a closed-form solution given by:
    \[
    \theta_i^{t+1,n} = \theta_i^{t,n} + \eta r_i(\theta^{t,n}),
    \]
    which satisfies all variants of RM since it only requires the immediate regret results and cumulative regret (Equations \ref{r}, \ref{R}). This maintains the dynamics of the RMs. Thus, the update rule in RM has a closed-form solution:
    \[
    \theta_i^{t+1,n} := \theta_i^{t,n} + \eta r_i(\theta^{t,n}),
    \]
    and for RM+: 
    \[
    \theta_i^{t+1,n} = \max(\theta_i^{t,n} + \eta r_i(\theta_i^{t,n}), 0).
    \]
    For discounted RM, we have:
    \[
    \theta_i^{t+1,n} = \mathbf{w}_i^{t,n} \cdot (\theta_i^{t,n} + \eta r_i(\theta^{t,n})),
    \]
    where
    \[
    \mathbf{w}_i^{t,n}[j] = 
    \begin{cases}
        \frac{t^\alpha}{t^\alpha + 1} & \text{if } (\theta_i^{t,n} + \eta r_i(\theta^{t,n}))[j] > 0, \\
        \frac{t^\beta}{t^\beta + 1} & \text{otherwise}.
    \end{cases}
    \]

    The remaining proof utilizes convex analysis and the technique of rearranging inequalities to obtain the Bregman distance between the saddle point \(\sigma^{*,n}\) and the last strategy \(\sigma^{t,n}\). We omit the detailed derivation here, and refer the reader to Appendix E of \citet{meng2023efficient} for further details. The key result is:
    \begin{equation}
    \sum_{t=1}^T C_1 D_{\psi}(\sigma^{*,n}, \sigma^{t,n}) \leq C_2,
    \label{eq: d saddle last}
    \end{equation}
    where \(C_1 = 2\eta\mu - (\eta C_0)^{2}\), \(C_2 = D_{\psi}(\theta^{1,n,*}, \theta^{1,n}) + \eta \langle -r(\theta^{*,n}), \theta^{1,n} \rangle\), and \(C_0 = 2P^{2} + 3\mu P + P + \mu\), with \(P = \max(|A_1|, |A_2|)\). Here, \(\theta^{1,n,*}\) is the projection of \(\theta^{1,n}\) onto the saddle-point ray.

    By Equation (\ref{eq: d saddle last}) and the Mean Value Theorem, we conclude Theorem \ref{thm:best_iterate_convergence_of_rtrms}. There must exist a time step \(t \leq T\) such that
    \begin{equation}
        \|\sigma^{*,n} - \sigma^{t,n}\|_2 \leq \sqrt{\frac{2C_2}{C_1 T}}.
        \label{eq: sccp converge}
    \end{equation}
\end{proof}

\section{Proof of Lemma \ref{le:relation_of_3_points}} \label{app:proof_of_le:relation_of_3_points}
\begin{proof}
    Let \(\sigma^{*,n}\) be the saddle point of the \(n\)-th SCCP, and \(\sigma^{*}\) be the NE of original game. Then we have:
    \[
    \langle \sigma_i^{*}, \ell_i^{*,n} \rangle \geq \langle \sigma_i^{*,n}, \ell_i^{*,n} \rangle,
    \]
    where \(\ell_i^{*,n} = g_i^{*,n} + \mu(\sigma_i^{*,n} - \sigma_i^{r,n})\) and \(g_i^{*,n} = -U \sigma_{-i}^{*,n}\) is the loss gradient of player \(i\) when the player \(-i\) use strategy \(\sigma_{-i}^{*,n}\).

    Summing over players 1 and 2, we obtain:
    \[
    \sum_{i \in \{1,2\}} \langle \sigma_i^{*}, g_i^{*,n} + \mu(\sigma_i^{*,n} - \sigma_i^{r,n}) \rangle \geq \sum_{i \in \{1,2\}} \langle \sigma_i^{*,n}, g_i^{*,n} + \mu(\sigma_i^{*,n} - \sigma_i^{r,n}) \rangle.
    \]
    
    Rearranging terms, we get:
    \begin{align}
        \langle \sigma^{*}, \mu(\sigma^{*,n} - \sigma^{r,n}) \rangle &\geq \langle \sigma^{*,n}, \mu(\sigma^{*,n} - \sigma^{r,n}) \rangle + 
        \underbrace{\sum_{i \in \{1,2\}}\langle \sigma_i^{*,n}, g_i^{*,n} \rangle - \sum_{i \in \{1,2\}} \langle \sigma_i^{*}, g_i^{*,n} \rangle}_{=\epsilon(\sigma^{*,n})\geq 0}
        \label{eq:convert_to_exp} \\
        &\geq \langle \sigma^{*,n}, \mu(\sigma^{*,n} - \sigma^{r,n}) \rangle.
        \label{eq:q45}
    \end{align}

    Rearranging terms, we obtain Equations (\ref{eq:q1}) and (\ref{eq:q2}):
    \begin{align}
        \langle \sigma^{*,n}, \sigma^{*} - \sigma^{*,n} \rangle &\geq \langle \sigma^{r,n}, \sigma^{*} - \sigma^{r,n} \rangle + \langle \sigma^{r,n}, \sigma^{r,n} - \sigma^{*,n} \rangle,
    \label{eq:q1} \\
         \langle -\sigma^{*}, \sigma^{*} - \sigma^{*,n} \rangle &\geq \langle -\sigma^{*}, \sigma^{*} - \sigma^{r,n} \rangle + \langle -\sigma^{*,n}, \sigma^{r,n} - \sigma^{*,n} \rangle.
    \label{eq:q2}
    \end{align}

    Combining Equations (\ref{eq:q1}) and (\ref{eq:q2}), we have:
    \begin{equation}
        \|\sigma^{*} - \sigma^{*,n}\|_2^2 \leq \|\sigma^{*} - \sigma^{r,n}\|_2^2 - \|\sigma^{r,n} - \sigma^{*,n}\|_2^2.
        \label{eq:bound_of_3_point}
    \end{equation}
    
    which proves Equation~\eqref{eq:saddle_reference_NE}. Assuming \(\sigma^{r,n} \neq \sigma^{*,n} \neq \sigma^{*}\), we apply the identity \(a^2 - b^2 = (a - b)(a + b)\) to obtain:
    \begin{equation}
    \|\sigma^{*} - \sigma^{r,n}\|_2 - \|\sigma^{*} - \sigma^{*,n}\|_2 \geq \frac{\|\sigma^{r,n} - \sigma^{*,n}\|_2^2}{\|\sigma^{*} - \sigma^{r,n}\|_2 + \|\sigma^{*} - \sigma^{*,n}\|_2}. \label{eq:relation_2}
    \end{equation}

    To bound \(\|\sigma^{r,n} - \sigma^{*,n}\|_2^2\), return to Equation~\eqref{eq:convert_to_exp}:
    \begin{align}
    \langle \sigma^{*}, \mu (\sigma^{*,n} - \sigma^{r,n}) \rangle &\geq \langle \sigma^{*,n}, \mu (\sigma^{*,n} - \sigma^{r,n}) \rangle + \epsilon(\sigma^{*,n}). \notag
    \end{align}
    Rearranging terms, we get:
    \begin{equation}
        \epsilon(\sigma^{*,n}) \leq \mu \langle \sigma^{*} - \sigma^{*,n}, \sigma^{*,n} - \sigma^{r,n} \rangle.
        \notag
    \end{equation}
    By the Cauchy-Schwarz inequality:
    \begin{equation}
        \epsilon(\sigma^{*,n}) \leq \mu \|\sigma^{*} - \sigma^{*,n}\|_2 \|\sigma^{*,n} - \sigma^{r,n}\|_2.
        \label{eq:exp_saddle}
    \end{equation}

    By Lemma~\ref{lemma:MS}, there exists a constant \(C > 0\) such that:
    \[
    C \|\sigma^{*} - \sigma^{*,n}\|_2 \leq \epsilon(\sigma^{*,n}) \leq \mu \|\sigma^{*} - \sigma^{*,n}\|_2 \|\sigma^{*,n} - \sigma^{r,n}\|_2.
    \]
    Assuming \(\sigma^{*} \neq \sigma^{*,n}\), we derive:
    \[
    \|\sigma^{*,n} - \sigma^{r,n}\|_2 \geq \frac{C}{\mu}. \label{eq:q35}
    \]
    Substituting into Equation~\eqref{eq:relation_2}, we obtain:
    \[
    \|\sigma^{*} - \sigma^{r,n}\|_2 - \|\sigma^{*} - \sigma^{*,n}\|_2 \geq \frac{C^2}{\mu^2 (\|\sigma^{*} - \sigma^{r,n}\|_2 + \|\sigma^{*} - \sigma^{*,n}\|_2)},
    \]
    proving Equation~\eqref{eq:bound_reference_saddle}. This completes the proof.
    
\end{proof}

\section{Proof of Theorem \ref{theorem: mu converge rate}} \label{app: proof_of_thm:mu_convergence_rate}

\begin{proof}
    Let us analyze the relation between \(\mu\) and the convergence bound in Equation (\ref{eq: sccp converge}). recall the constants \(C_0\), \(C_1\), and \(C_2\) that contain \(\mu\):
    \begin{equation}
        C_0 = 2P^{2} + 3\mu P + P + \mu, \quad P = \max(|A_1|, |A_2|),
        \label{eq:C_0}
    \end{equation}
    and 
    \begin{equation}
        C_1 = 2\eta\mu - (\eta C_0)^2.
        \label{eq:C_1}
    \end{equation}

    For \(C_2\), we expand as follows:
    \begin{align}
        C_2 &= D_{\psi}(\theta^{1,n,*}, \theta^{1,n}) + \eta\langle -r(\theta^{*,n}), \theta^{1,n} \rangle \nonumber\\
        &= \eta \langle \ell^{*,n} - \langle \ell^{*,n}, \sigma^{*,n} \rangle \mathbf{1}, \theta^{1,n} \rangle + D_{\psi}(\theta^{1,n,*}, \theta^{1,n}) \nonumber \\
        &= \eta\|\theta^{1,n}\|_1 \langle \ell^{*,n} - \langle \ell^{*,n}, \sigma^{*,n} \rangle \mathbf{1}, \sigma^{1,n} \rangle + D_{\psi}(\theta^{1,n,*}, \theta^{1,n}) \nonumber \\
        &= \eta\|\theta^{1,n}\|_1 \langle \ell^{*,n}, \sigma^{*,n} - \sigma^{1,n} \rangle + D_{\psi}(\theta^{1,n,*}, \theta^{1,n}) \nonumber \\
        &= \eta\|\theta^{1,n}\|_1 \langle g^{*,n} + \mu(\sigma^{*,n} - \sigma^{r,n}), \sigma^{*,n} - \sigma^{1,n} \rangle + D_{\psi}(\theta^{1,n,*}, \theta^{1,n}) \nonumber \\
        &\leq \eta\|\theta^{1,n}\|_1 (\Omega + 2\mu) + D_{\psi}(\theta^{1,n,*}, \theta^{1,n}) 
        \label{eq:C_2}
    \end{align}
    where \(\Omega = \max_{\sigma_1'} \sigma_1' U \sigma_2 - \min_{\sigma_2'} \sigma_1 U \sigma_2'\). The last inequality follows from the Cauchy-Schwarz inequality and \(\|\sigma - \sigma'\|_2 \leq \sqrt{2}\) for any \(\sigma, \sigma' \in \Delta^n\).

    By substituting \(C_1\) (Equation~\eqref{eq:C_1}) and \(C_2\) (Equation~\eqref{eq:C_2}) into Equation~\eqref{eq: sccp converge}, we derive the following bound:
    \begin{equation}
    \|\sigma^{*,n} - \sigma^{t,n}\|_2 \leq \sqrt{\frac{2}{T}}\cdot \sqrt{\frac{C_2}{C_1}} = \sqrt{\frac{2}{T}} \cdot \sqrt{\frac{\eta \|\theta^{1,n}\|_1 (\Omega + 2\mu) + D_{\psi}(\theta^{1,n,*}, \theta^{1,n})}{2 \eta \mu - (\eta C_0)^2}},
    \label{eq:sccp_bound}
    \end{equation}
    substituting \(\eta=\frac{\mu}{C_0^2}\) and \(C_0=2P^{2} + 3\mu P + P + \mu\), we have:
    \begin{align}
        \|\sigma^{*,n} - \sigma^{t,n}\|_2 &\leq \sqrt{\frac{2}{T}}\cdot \sqrt{\frac{C_3}{\mu}+C_4+\frac{(C_5+\mu(C_6))^2C_7}{\mu^2}}\\
        &=\sqrt{\frac{2}{T}}\cdot \sqrt{\frac{C_5^2C_7}{\mu^2}+\frac{C_3+C_5 C_6 C_7}{\mu}+C_4+C_6^2C_7},
    \end{align} 
    where \(C_3=\|\theta^{1,n}\|_1\Omega\), \(C_4=2\|\theta^{1,n}\|_1\), \(C_5=2P^2+P\), \(C_6=3P+1\), \(C_7=D_{\psi}(\theta^{1,n,*},\theta^{1,n})\).
    This concludes the proof.
\end{proof}

 \section{Proof of Theorem \ref{thm:best_iterate_convergence_of_RTCFRs}} \label{app:proof_of_thm:best_iterate_convergence_of_RTCFRs}

\begin{proof}
    For an extensive-form game, we optimize the behavior strategy using a bottom-up update approach in CFR, similar to RM. Thus, for any information set \(I \in \mathcal{I}\) for any player, based on Equation (\ref{eq: sccp converge}), we have:
    \begin{equation}
        \|\sigma^{*,n}(I) - \sigma^{t,n}(I)\|_2 \leq \sqrt{\frac{2C_2^{\max}}{C_1^{\min} T}},
        \label{eq:bound_local}
    \end{equation}
    where \(C_1^{\min} = \min_{I \in \mathcal{I}_1 \cup \mathcal{I}_2} C_1^I\) and \(C_2^{\max} = \max_{I \in \mathcal{I}_1 \cup \mathcal{I}_2} C_2^I\).

    Let \(\mathcal{Q} \subseteq \mathbb{R}_{\geq 0}^{|\Sigma|}\) denote the sequence-form strategy space, with \(q[\emptyset] = 1\) and \(\sum_{a \in A(I)} q(Ia) = q(pI)\). Define \(M_{\mathcal{Q}} = \max_{q \in \mathcal{Q}} \|q\|_1\), representing the maximum number of information sets with nonzero reach probability under a pure strategy. For an information set \(I \in \mathcal{I}\), let \(q^{t,n}(\triangle_I) \in \mathcal{Q}_{\triangle_I} \subseteq \mathcal{Q}\) denote the sequence-form strategy in the subtree rooted at \(I\). The distance between the saddle-point sequence-form strategy \(q^{*,n}(\triangle_I)\) and the iterate strategy \(q^{t,n}(\triangle_I)\) is defined as:
    \begin{align}
        \|q^{*,n}(\triangle_I) - q^{t,n}(\triangle_I)\| = & q^{*,n}(pI) \Bigg( \|\sigma^{*,n}(I) - \sigma^{t,n}(I)\|_2 \notag\\
        & + \sum_{a \in A(I)} \sum_{I': pI' = Ia} \frac{q^{*,n}(Ia)}{q^{*,n}(pI)} \|q^{*,n}(\triangle_{I'}) - q^{t,n}(\triangle_{I'})\| \Bigg).
    \end{align}

    Using Equation~\eqref{eq:bound_local}, recursively, we bound:
    \[
        \|q^{*,n}(\triangle_I) - q^{t,n}(\triangle_I)\| \leq q^{*,n}(pI) M_{\mathcal{Q}_{\triangle_I}} \sqrt{\frac{2C_2^{\max}}{C_1^{\min} T}},
    \]
    where \(M_{\mathcal{Q}_{\triangle_I}} = \max_{q \in \mathcal{Q}_I} \|q\|_1\). For the entire strategy space (\(q(\emptyset) = 1\)):
    \[
        \|q^{*,n} - q^{t,n}\| \leq M_{\mathcal{Q}} \sqrt{\frac{2C_2^{\max}}{C_1^{\min} T}},
    \]
    yielding the \(O(1/\sqrt{T})\) convergence rate.
\end{proof}



\section{Omitted Details of Experiments} \label{appendix: experiments}
\subsection{Description of the Games} \label{appendix: game description}

\textbf{Kuhn Poker}, introduced in \citet{kuhn1950simplified}, is a simplified form of poker originally using a deck of three cards: King, Queen, and Jack. In its variations, the number of cards can be increased. For instance, in Kuhn(n), the deck consists of \(n\) cards. Each player is dealt one card, with the rest remaining unseen. The betting process involves each player having the option to check, raise, call, or fold, with the player holding the higher card winning the pot. The increased deck size adds complexity and depth to the strategic elements of the game.

\textbf{Leduc Poker}, introduced in \citet{southey2012bayes}, uses a deck of 6 cards: two Kings, two Queens, and two Jacks. Each player is dealt a private card, and there is an additional unrevealed public card. In the first round, Player 2 bets after Player 1 bets. The public card is then revealed, followed by another betting stage. In the showdown stage, the player who has a card matching the rank of the public card wins. If neither player has a matching card, the player with the higher card wins. This game can also be expanded to use any number of cards, such as in the 2n-cards variant Leduc(n), to increase complexity and strategic depth.

\textbf{Goofspiel}, introduced in \citet{ross1971goofspiel}, is an \(n\)-player card game, utilizing \(n + 1\) identical decks, each containing \(k\) cards with values ranging from 1 to \(k\). At the beginning of the game, each player is dealt a full deck as their hand, while the third deck, referred to as the ``prize" deck, is shuffled and placed face down on the board. During each turn, the top card from the prize deck is revealed. Subsequently, each player privately selects a card from their hand to bid for the revealed prize card. The chosen cards are then revealed simultaneously, and the player with the highest card wins the prize. In the event that two or more players reveal cards of equal value, the prize card is split among them. The players' scores are calculated as the sum of the values of the prize cards they have won. In this paper, we set \(n = 2\) and \(k = 4\), with the final reward computed as the difference of scores.

\textbf{Liar's Dice}, introduced in \citet{lisy2015online}, begins with each of the \(n\) players privately rolling a fair \(k\)-sided die. Players then take turns making claims about the results of all dice rolls. The first player starts by announcing any number between 1 and \(k\) and the minimum quantity of dice they believe display that number among all players. Subsequent players can either raise the claim or challenge it by accusing the previous player of lying. A claim is considered higher if it either states a higher number or increases the quantity of dice showing the stated number. If a player challenges a claim and it is proven false, the challenger gains +1 point, and the player who made the false claim loses -1 point. Conversely, if the claim is true, the player who made the claim gains +1 point, and the challenger loses -1 point. In this paper, we set \(n = 2\) and \(k = 6\).

Table \ref{tab: game_info} shows the sizes of the games used in this paper, including the number of information sets, sequences, and leaves. The leaves denote the utility nodes in the sequences of player 1 and player 2 at the terminal stage.

\begin{table}
    \centering
    \caption{Sizes of the games}
    \begin{tabular}{lrrr}
    \toprule
    Game Instance & Information Sets & Sequences & Leaves \\
    \midrule
    Kuhn Poker (3) & 12 & 26 & 30 \\
    Leduc Poker (3) & 288 & 674 & 1116 \\
    Liar's Dice (6) & 24576 & 49142 & 147420 \\
    Goofspiel (4) & 34952 & 42658 & 13824 \\
    \bottomrule
    \end{tabular}
    \label{tab: game_info}
\end{table}

\subsection{Detail Setting in Experiments} \label{appendix: exp setting}
We provide the detailed parameter settings used in our experiments. All parameters were optimized through a search process covering approximately \(1/10\) of the total iterations used in the final experiments.

In Normal-form Games (NFGs), the RT framework includes variants such as RTRM+, RTDRM, adaptive RTRM+, and adaptive RTDRM. We swept for the optimal RT-term weight \(\mu\) and SCCP interval \(T\), with parameters defined as \( \text{par} := (\mu, T) \in \{1, 0.5, 0.1, \dots\} \times \{5, 10, 20, \dots\} \), to find the best configuration. The discount parameters for DRM and adaptive DRM were fixed at \((\alpha, \beta) = (2, 0)\), with further details on parameter selection discussed in Appendix~\ref{app:rtdcfr_test}.

For methods based on MWU, including OMWU, Reg-OMWU, and R-NaD, we tuned the learning rate by applying a logarithmic grid sweep from \(0.01\) to \(10\) across 20 grid points to identify the optimal parameter. For Reg-OMWU and R-NaD, additional regularization weights were optimized similarly to the RT framework, using values in \(\{1, 0.5, 0.1, \dots\}\).

Table \ref{tab:par in matrix} shows the final parameter values used for each algorithm in Figure \ref{fig:matrix_game}.

\begin{table}[ht]
    \centering
    \caption{Hyperparameters used in matrix games for each algorithm and matrix seed}
    \begin{tabular}{lrrrr}
        \toprule
         & Matrix Seed 0 & Matrix Seed 1 & Matrix Seed 2 & Matrix Seed 3 \\
        \midrule
         RTRM+ \((T,\mu)\) & 10, 0.5 & 30, 0.1 & 20, 0.1 & 20, 0.1 \\
         RTDRM \((T,\mu)\) & 5, 0.5 & 20, 0.1 & 20, 0.1 & 20, 0.1 \\
         Adp-RTRM+ \((T, \mu)\) & 20, 0.1 & 40, 0.05 & 20, 0.05 & 30, 0.05 \\
         Adp-RTDRM \((T, \mu)\) & 20, 0.1 & 20, 0.05 & 20, 0.05 & 20, 0.05 \\
         OMWU \((\eta)\) & 0.379 & 0.379 & 0.263 & 0.379 \\
         Reg-OMWU \((\eta, \mu)\) & 0.379, 0.1 & 0.379, 0.1 & 0.263, 0.1 & 0.379, 0.05 \\
         R-NaD \((\eta,T,\mu)\) & 1.128, 30, 0.1 & 1.128, 30, 0.1 & 0.784, 30, 0.05 & 0.784, 20, 0.1 \\
        \bottomrule
    \end{tabular}
    \label{tab:par in matrix}
\end{table}

In EFGs, we use a similar search method to identify efficient parameters for each algorithm. Table \ref{tab:par in EFG} shows the final parameter settings used in Figure \ref{fig:efg}. Additionally, we consider that applying a dilated weight in gradient-based methods may improve convergence in DOMWU. We compare different dilated weights, including the ``all-ones" weight and those from \citet{kroer2015faster}, \citet{kroer2017theoretical}, and \citet{farina2021faster}, applied to Kuhn and Leduc poker. The results in Figure \ref{fig:domwu} show that the ``all-ones" method performs better than the alternatives, so we adopt it in our experiments.

\begin{table}[ht]
    \centering
    \caption{Hyperparameters used in EFGs for each algorithm and game}
    \begin{tabular}{lrrrr}
        \toprule
          & Kuhn (3) & Leduc (3) & Goofspiel (4) & Liar's Dice (6) \\
        \midrule
         RTCFR+ \((T,\mu)\) & 5, 0.1 & 125, 0.001 & 30, 0.005 & 1, 0.1 \\
         RTDCFR \((T,\mu)\) & 5, 0.05 & 125, 0.001 & 20, 0.005 & 1, 0.1 \\
         Adp-RTCFR+ \((T, \mu)\) & 5, 0.05 & 200, 0.01 & 15, 0.1 & 1, 0.01 \\
         Adp-RTDCFR \((T, \mu)\) & 5, 0.05 & 150, 0.01 & 10, 0.1 & 1, 0.01 \\
         DOMWU \((\eta)\) & 0.127 & 0.127 & 0.014 & 0.127 \\
         Reg-DOMWU \((\eta, \mu)\) & 0.127, \(10^{-7}\) & 0.112, 0.001 & 0.127, \(10^{-4}\) & 0.078, 0.01 \\
         R-NaD \((\eta, T, \mu)\) & 0.236, 10, 0.05 & 0.263, 20, 0.1 & 0.029, 10, 0.01 & 0.263, 10, 0.05 \\
        \bottomrule
    \end{tabular}
    \label{tab:par in EFG}
\end{table}

\begin{figure}[htp]
    \centering
    \begin{subfigure}[b]{0.48\linewidth}
        \centering
        \includegraphics[width=\linewidth]{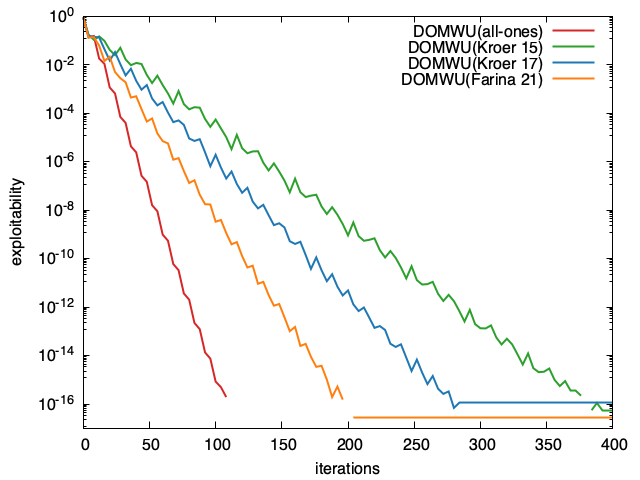}
        \caption{Kuhn Poker (3)}
        \label{fig:domwu-0}
    \end{subfigure}
    \hfill 
    \begin{subfigure}[b]{0.48\linewidth}
        \centering
        \includegraphics[width=\linewidth]{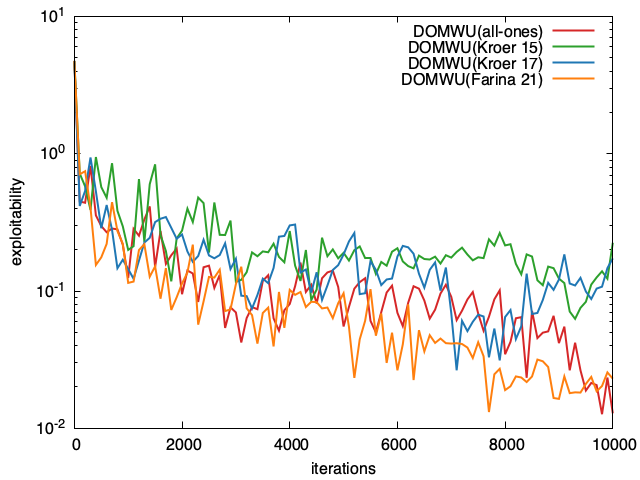}
        \caption{Leduc Poker (3)}
        \label{fig:domwu-1}
    \end{subfigure}

    \caption{Comparison of dilated weights in Kuhn and Leduc poker.}
    \label{fig:domwu}
\end{figure}

\subsection{Experiments on RTDRM and RTDCFR}
\label{app:rtdcfr_test}

In this section, we evaluate the impact of regret weights \((\alpha, \beta) \in \{1.5, 2\} \times \{-\infty, 0, 0.5\}\) on the convergence rate of RTDRM and RTDCFR. Following the experimental setup in Section~\ref{sec:experiments}, we test RTDRM on three \(10 \times 10\) matrix games with seeds \{0, 1, 2\}, with results presented in Figure~\ref{fig:rtdrm}. For EFGs, we evaluate RTDCFR on three EFGs: Kuhn Poker (3), Leduc Poker (3), and Goofspiel (4), with results shown in Figure~\ref{fig:rtdcfr}. The parameters \(T\) and \(\mu\) are selected from Table~\ref{tab:par in matrix} for matrix games and Table~\ref{tab:par in EFG} for EFGs, respectively.

Our results demonstrate that a smaller regret weight \(\alpha = 1.5\) generally underperforms compared to \(\alpha = 2\), exhibiting significant fluctuations and cyclic patterns in convergence and divergence. This behavior aligns with the regret-matching (RM) mechanism, which is driven by positive regret. A smaller \(\alpha\) heavily discounts positive regret, making the cumulative regret overly sensitive to recent updates and leading to cyclic convergence patterns. In contrast, a larger \(\alpha = 2\) promotes more stable convergence. However, excessively large values (e.g., \(\alpha = +\infty, \beta = -\infty\)) cause RTDCFR to degenerate into RTCFR+, which often yields suboptimal performance, highlighting the importance of adaptive regret weighting. For the negative regret discount weight, \(\beta = 0\) generally outperforms other settings, particularly in Kuhn Poker (3). Based on these findings, we select \((\alpha, \beta) = (2, 0)\) for the final experiments in Section~\ref{sec:experiments}.

\begin{figure}[htp]
    \centering
    \begin{subfigure}[b]{0.31\linewidth}
        \centering
        \includegraphics[width=\linewidth]{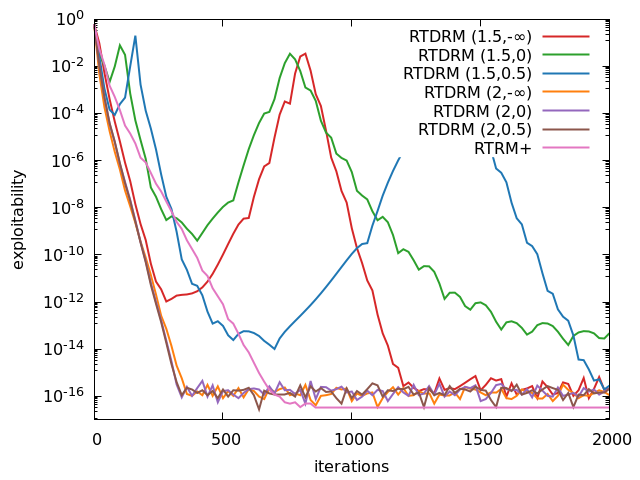}
        \caption{Matrix Game (Seed 0)}
        \label{fig:rtdrm_0}
    \end{subfigure}
    \hfill
    \begin{subfigure}[b]{0.31\linewidth}
        \centering
        \includegraphics[width=\linewidth]{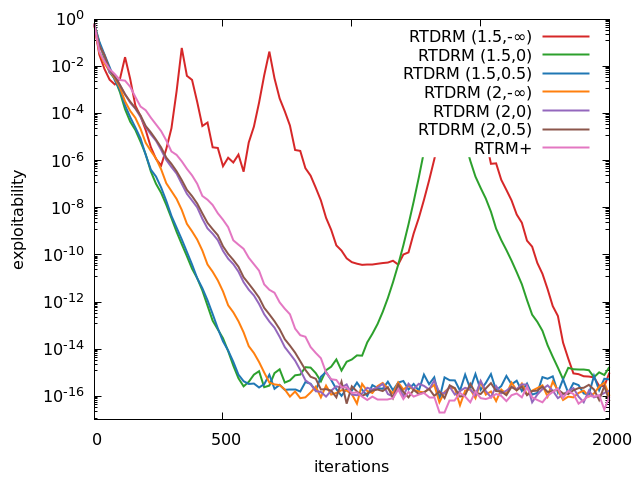}
        \caption{Matrix Game (Seed 1)}
        \label{fig:rtdrm_1}
    \end{subfigure}
    \hfill
    \begin{subfigure}[b]{0.31\linewidth}
        \centering
        \includegraphics[width=\linewidth]{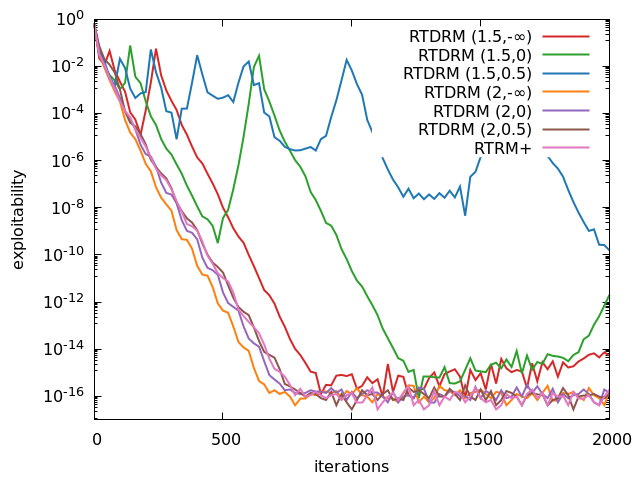}
        \caption{Matrix Game (Seed 2)}
        \label{fig:rtdrm_2}
    \end{subfigure}
    \caption{Convergence of RTDRM with varying \(\alpha\) and \(\beta\) in \(10 \times 10\) matrix games.}
    \label{fig:rtdrm}
\end{figure}

\begin{figure}[htp]
    \centering
    \begin{subfigure}[b]{0.31\linewidth}
        \centering
        \includegraphics[width=\linewidth]{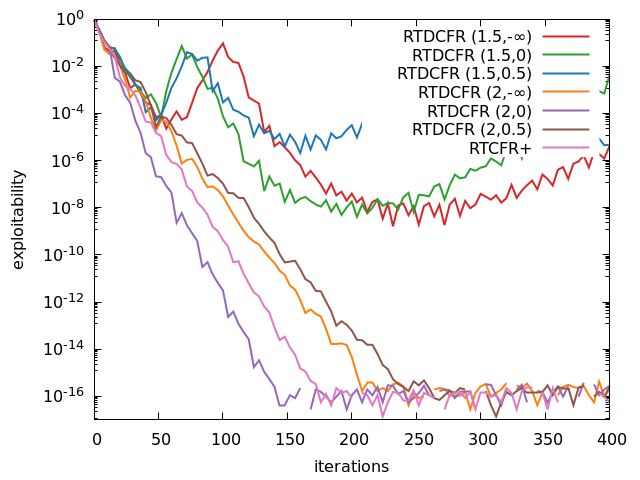}
        \caption{Kuhn Poker (3)}
        \label{fig:rtdcfr_kuhn}
    \end{subfigure}
    \hfill
    \begin{subfigure}[b]{0.31\linewidth}
        \centering
        \includegraphics[width=\linewidth]{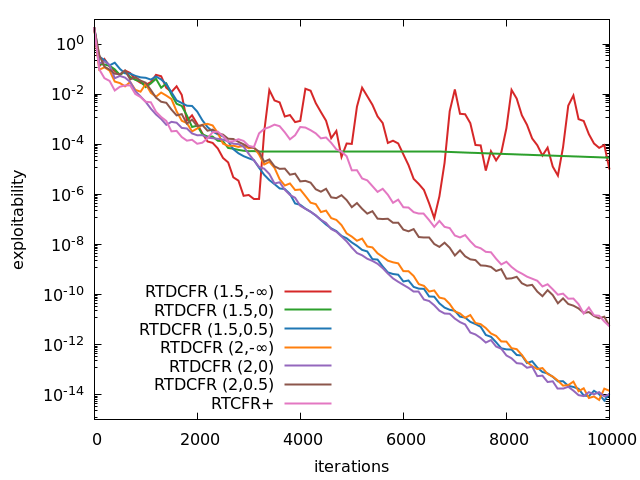}
        \caption{Leduc Poker (3)}
        \label{fig:rtdcfr_leduc}
    \end{subfigure}
    \hfill
    \begin{subfigure}[b]{0.31\linewidth}
        \centering
        \includegraphics[width=\linewidth]{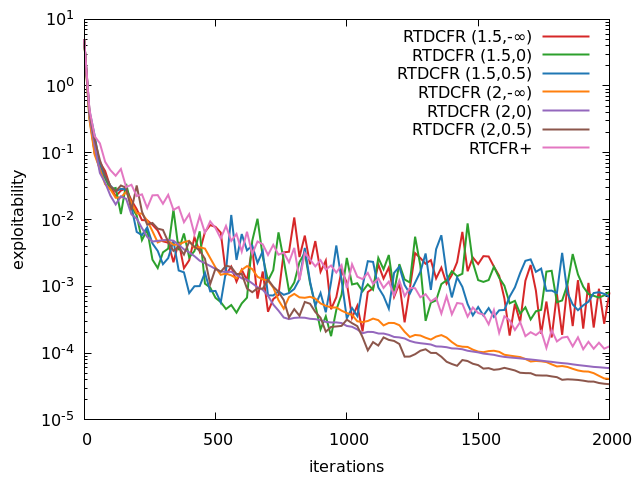}
        \caption{Goofspiel (4)}
        \label{fig:rtdcfr_goofspiel}
    \end{subfigure}
    \caption{Convergence of RTDCFR with varying \(\alpha\) and \(\beta\) in extensive-form games.}
    \label{fig:rtdcfr}
\end{figure}

\vskip 0.2in

\bibliography{jmlr-style-file-master/sample}

\begin{thebibliography}{46}
\providecommand{\natexlab}[1]{#1}
\providecommand{\url}[1]{\texttt{#1}}
\expandafter\ifx\csname urlstyle\endcsname\relax
  \providecommand{\doi}[1]{doi: #1}\else
  \providecommand{\doi}{doi: \begingroup \urlstyle{rm}\Url}\fi

\bibitem[Abe et~al.(2022{\natexlab{a}})Abe, Ariu, Sakamoto, Toyoshima, and Iwasaki]{abe2022last}
K.~Abe, K.~Ariu, M.~Sakamoto, K.~Toyoshima, and A.~Iwasaki.
\newblock Last-iterate convergence with full and noisy feedback in two-player zero-sum games.
\newblock \emph{arXiv preprint arXiv:2208.09855}, 2022{\natexlab{a}}.

\bibitem[Abe et~al.(2022{\natexlab{b}})Abe, Sakamoto, and Iwasaki]{abe2022mutation}
K.~Abe, M.~Sakamoto, and A.~Iwasaki.
\newblock Mutation-driven follow the regularized leader for last-iterate convergence in zero-sum games.
\newblock In \emph{Uncertainty in Artificial Intelligence}, pages 1--10. PMLR, 2022{\natexlab{b}}.

\bibitem[Anagnostides et~al.(2022)Anagnostides, Panageas, Farina, and Sandholm]{anagnostides2022last}
I.~Anagnostides, I.~Panageas, G.~Farina, and T.~Sandholm.
\newblock On last-iterate convergence beyond zero-sum games.
\newblock In \emph{International Conference on Machine Learning}, pages 536--581. PMLR, 2022.

\bibitem[Bauer et~al.(2019)Bauer, Broom, and Alonso]{bauer2019stabilization}
J.~Bauer, M.~Broom, and E.~Alonso.
\newblock The stabilization of equilibria in evolutionary game dynamics through mutation: mutation limits in evolutionary games.
\newblock \emph{Proceedings of the Royal Society A}, 475\penalty0 (2231):\penalty0 20190355, 2019.

\bibitem[Bowling et~al.(2015)Bowling, Burch, Johanson, and Tammelin]{bowling2015heads}
M.~Bowling, N.~Burch, M.~Johanson, and O.~Tammelin.
\newblock Heads-up limit hold’em poker is solved.
\newblock \emph{Science}, 347\penalty0 (6218):\penalty0 145--149, 2015.

\bibitem[Brown(1951)]{brown1951iterative}
G.~W. Brown.
\newblock Iterative solution of games by fictitious play.
\newblock \emph{Act. Anal. Prod Allocation}, 13\penalty0 (1):\penalty0 374, 1951.

\bibitem[Brown and Sandholm(2019)]{brown2019solving}
N.~Brown and T.~Sandholm.
\newblock Solving imperfect-information games via discounted regret minimization.
\newblock In \emph{Proceedings of the AAAI Conference on Artificial Intelligence}, volume~33, pages 1829--1836, 2019.

\bibitem[Brown et~al.(2019)Brown, Lerer, Gross, and Sandholm]{brown2019deep}
N.~Brown, A.~Lerer, S.~Gross, and T.~Sandholm.
\newblock Deep counterfactual regret minimization.
\newblock In \emph{International conference on machine learning}, pages 793--802. PMLR, 2019.

\bibitem[Burch et~al.(2019)Burch, Moravcik, and Schmid]{burch2019revisiting}
N.~Burch, M.~Moravcik, and M.~Schmid.
\newblock Revisiting cfr+ and alternating updates.
\newblock \emph{Journal of Artificial Intelligence Research}, 64:\penalty0 429--443, 2019.

\bibitem[Cai et~al.(2023)Cai, Farina, Grand-Cl{\'e}ment, Kroer, Lee, Luo, and Zheng]{cai2023last}
Y.~Cai, G.~Farina, J.~Grand-Cl{\'e}ment, C.~Kroer, C.-W. Lee, H.~Luo, and W.~Zheng.
\newblock Last-iterate convergence properties of regret-matching algorithms in games.
\newblock \emph{arXiv preprint arXiv:2311.00676}, 2023.

\bibitem[Cai et~al.(2024)Cai, Farina, Grand-Cl{\'e}ment, Kroer, Lee, Luo, and Zheng]{cai2024fast}
Y.~Cai, G.~Farina, J.~Grand-Cl{\'e}ment, C.~Kroer, C.-W. Lee, H.~Luo, and W.~Zheng.
\newblock Fast last-iterate convergence of learning in games requires forgetful algorithms.
\newblock \emph{arXiv preprint arXiv:2406.10631}, 2024.

\bibitem[Farina et~al.(2019)Farina, Kroer, and Sandholm]{farina2019online}
G.~Farina, C.~Kroer, and T.~Sandholm.
\newblock Online convex optimization for sequential decision processes and extensive-form games.
\newblock In \emph{Proceedings of the AAAI Conference on Artificial Intelligence}, volume~33, pages 1917--1925, 2019.

\bibitem[Farina et~al.(2021)Farina, Kroer, and Sandholm]{farina2021faster}
G.~Farina, C.~Kroer, and T.~Sandholm.
\newblock Faster game solving via predictive blackwell approachability: Connecting regret matching and mirror descent.
\newblock In \emph{Proceedings of the AAAI Conference on Artificial Intelligence}, volume~35, pages 5363--5371, 2021.

\bibitem[Farina et~al.(2024)Farina, Grand-Cl{\'e}ment, Kroer, Lee, and Luo]{farina2024regret}
G.~Farina, J.~Grand-Cl{\'e}ment, C.~Kroer, C.-W. Lee, and H.~Luo.
\newblock Regret matching+:(in) stability and fast convergence in games.
\newblock \emph{Advances in Neural Information Processing Systems}, 36, 2024.

\bibitem[Farina et~al.(2025)Farina, Kroer, and Sandholm]{farina2025better}
G.~Farina, C.~Kroer, and T.~Sandholm.
\newblock Better regularization for sequential decision spaces: Fast convergence rates for nash, correlated, and team equilibria.
\newblock \emph{Operations Research}, 2025.

\bibitem[Hart and Mas-Colell(2000)]{hart2000simple}
S.~Hart and A.~Mas-Colell.
\newblock A simple adaptive procedure leading to correlated equilibrium.
\newblock \emph{Econometrica}, 68\penalty0 (5):\penalty0 1127--1150, 2000.

\bibitem[Hazan et~al.(2016)]{hazan2016introduction}
E.~Hazan et~al.
\newblock Introduction to online convex optimization.
\newblock \emph{Foundations and Trends{\textregistered} in Optimization}, 2\penalty0 (3-4):\penalty0 157--325, 2016.

\bibitem[Heinrich and Silver(2016)]{heinrich2016deep}
J.~Heinrich and D.~Silver.
\newblock Deep reinforcement learning from self-play in imperfect-information games.
\newblock \emph{arXiv preprint arXiv:1603.01121}, 2016.

\bibitem[Heinrich et~al.(2015)Heinrich, Lanctot, and Silver]{heinrich2015fictitious}
J.~Heinrich, M.~Lanctot, and D.~Silver.
\newblock Fictitious self-play in extensive-form games.
\newblock In \emph{International conference on machine learning}, pages 805--813. PMLR, 2015.

\bibitem[Hoda et~al.(2010)Hoda, Gilpin, Pena, and Sandholm]{hoda2010smoothing}
S.~Hoda, A.~Gilpin, J.~Pena, and T.~Sandholm.
\newblock Smoothing techniques for computing nash equilibria of sequential games.
\newblock \emph{Mathematics of Operations Research}, 35\penalty0 (2):\penalty0 494--512, 2010.

\bibitem[Kalai and Vempala(2005)]{kalai2005efficient}
A.~Kalai and S.~Vempala.
\newblock Efficient algorithms for online decision problems.
\newblock \emph{Journal of Computer and System Sciences}, 71\penalty0 (3):\penalty0 291--307, 2005.

\bibitem[Koller et~al.(1994)Koller, Megiddo, and Von~Stengel]{koller1994fast}
D.~Koller, N.~Megiddo, and B.~Von~Stengel.
\newblock Fast algorithms for finding randomized strategies in game trees.
\newblock In \emph{Proceedings of the twenty-sixth annual ACM symposium on Theory of computing}, pages 750--759, 1994.

\bibitem[Kroer et~al.(2015)Kroer, Waugh, Kilin{\c{c}}-Karzan, and Sandholm]{kroer2015faster}
C.~Kroer, K.~Waugh, F.~Kilin{\c{c}}-Karzan, and T.~Sandholm.
\newblock Faster first-order methods for extensive-form game solving.
\newblock In \emph{Proceedings of the Sixteenth ACM Conference on Economics and Computation}, pages 817--834, 2015.

\bibitem[Kroer et~al.(2017)Kroer, Waugh, Kilinc-Karzan, and Sandholm]{kroer2017theoretical}
C.~Kroer, K.~Waugh, F.~Kilinc-Karzan, and T.~Sandholm.
\newblock Theoretical and practical advances on smoothing for extensive-form games.
\newblock \emph{arXiv preprint arXiv:1702.04849}, 2017.

\bibitem[Kroer et~al.(2018)Kroer, Farina, and Sandholm]{kroer2018solving}
C.~Kroer, G.~Farina, and T.~Sandholm.
\newblock Solving large sequential games with the excessive gap technique.
\newblock \emph{Advances in neural information processing systems}, 31, 2018.

\bibitem[Kuhn(1950)]{kuhn1950simplified}
H.~W. Kuhn.
\newblock A simplified two-person poker.
\newblock \emph{Contributions to the Theory of Games}, 1\penalty0 (97-103):\penalty0 2, 1950.

\bibitem[Lanctot et~al.(2009)Lanctot, Waugh, Zinkevich, and Bowling]{lanctot2009monte}
M.~Lanctot, K.~Waugh, M.~Zinkevich, and M.~Bowling.
\newblock Monte carlo sampling for regret minimization in extensive games.
\newblock \emph{Advances in neural information processing systems}, 22, 2009.

\bibitem[Lee et~al.(2021)Lee, Kroer, and Luo]{lee2021last}
C.-W. Lee, C.~Kroer, and H.~Luo.
\newblock Last-iterate convergence in extensive-form games.
\newblock \emph{Advances in Neural Information Processing Systems}, 34:\penalty0 14293--14305, 2021.

\bibitem[Lis{\`y} et~al.(2015)Lis{\`y}, Lanctot, and Bowling]{lisy2015online}
V.~Lis{\`y}, M.~Lanctot, and M.~H. Bowling.
\newblock Online monte carlo counterfactual regret minimization for search in imperfect information games.
\newblock In \emph{AAMAS}, pages 27--36, 2015.

\bibitem[Liu et~al.(2022{\natexlab{a}})Liu, Ozdaglar, Yu, and Zhang]{liu2022power}
M.~Liu, A.~Ozdaglar, T.~Yu, and K.~Zhang.
\newblock The power of regularization in solving extensive-form games.
\newblock \emph{arXiv preprint arXiv:2206.09495}, 2022{\natexlab{a}}.

\bibitem[Liu et~al.(2022{\natexlab{b}})Liu, Jiang, Li, and Li]{liu2022equivalence}
W.~Liu, H.~Jiang, B.~Li, and H.~Li.
\newblock Equivalence analysis between counterfactual regret minimization and online mirror descent.
\newblock In \emph{International Conference on Machine Learning}, pages 13717--13745. PMLR, 2022{\natexlab{b}}.

\bibitem[Lockhart et~al.(2019)Lockhart, Lanctot, P{\'e}rolat, Lespiau, Morrill, Timbers, and Tuyls]{lockhart2019computing}
E.~Lockhart, M.~Lanctot, J.~P{\'e}rolat, J.-B. Lespiau, D.~Morrill, F.~Timbers, and K.~Tuyls.
\newblock Computing approximate equilibria in sequential adversarial games by exploitability descent.
\newblock \emph{arXiv preprint arXiv:1903.05614}, 2019.

\bibitem[McMahan et~al.(2003)McMahan, Gordon, and Blum]{mcmahan2003planning}
H.~B. McMahan, G.~J. Gordon, and A.~Blum.
\newblock Planning in the presence of cost functions controlled by an adversary.
\newblock In \emph{Proceedings of the 20th International Conference on Machine Learning (ICML-03)}, pages 536--543, 2003.

\bibitem[Meng et~al.(2023)Meng, Ge, Li, An, and Gao]{meng2023efficient}
L.~Meng, Z.~Ge, W.~Li, B.~An, and Y.~Gao.
\newblock Efficient last-iterate convergence algorithms in solving games.
\newblock \emph{arXiv preprint arXiv:2308.11256}, 2023.

\bibitem[Nash et~al.(1950)]{nash1950non}
J.~F. Nash et~al.
\newblock Non-cooperative games.
\newblock 1950.

\bibitem[Nash~Jr(1950)]{nash1950equilibrium}
J.~F. Nash~Jr.
\newblock Equilibrium points in n-person games.
\newblock \emph{Proceedings of the national academy of sciences}, 36\penalty0 (1):\penalty0 48--49, 1950.

\bibitem[Nesterov(2005)]{nesterov2005excessive}
Y.~Nesterov.
\newblock Excessive gap technique in nonsmooth convex minimization.
\newblock \emph{SIAM Journal on Optimization}, 16\penalty0 (1):\penalty0 235--249, 2005.

\bibitem[Perolat et~al.(2021)Perolat, Munos, Lespiau, Omidshafiei, Rowland, Ortega, Burch, Anthony, Balduzzi, De~Vylder, et~al.]{perolat2021poincare}
J.~Perolat, R.~Munos, J.-B. Lespiau, S.~Omidshafiei, M.~Rowland, P.~Ortega, N.~Burch, T.~Anthony, D.~Balduzzi, B.~De~Vylder, et~al.
\newblock From poincar{\'e} recurrence to convergence in imperfect information games: Finding equilibrium via regularization.
\newblock In \emph{International Conference on Machine Learning}, pages 8525--8535. PMLR, 2021.

\bibitem[Perolat et~al.(2022)Perolat, De~Vylder, Hennes, Tarassov, Strub, de~Boer, Muller, Connor, Burch, Anthony, et~al.]{perolat2022mastering}
J.~Perolat, B.~De~Vylder, D.~Hennes, E.~Tarassov, F.~Strub, V.~de~Boer, P.~Muller, J.~T. Connor, N.~Burch, T.~Anthony, et~al.
\newblock Mastering the game of stratego with model-free multiagent reinforcement learning.
\newblock \emph{Science}, 378\penalty0 (6623):\penalty0 990--996, 2022.

\bibitem[Ross(1971)]{ross1971goofspiel}
S.~M. Ross.
\newblock Goofspiel—the game of pure strategy.
\newblock \emph{Journal of Applied Probability}, 8\penalty0 (3):\penalty0 621--625, 1971.

\bibitem[Southey et~al.(2012)Southey, Bowling, Larson, Piccione, Burch, Billings, and Rayner]{southey2012bayes}
F.~Southey, M.~P. Bowling, B.~Larson, C.~Piccione, N.~Burch, D.~Billings, and C.~Rayner.
\newblock Bayes' bluff: Opponent modelling in poker.
\newblock \emph{arXiv preprint arXiv:1207.1411}, 2012.

\bibitem[Tammelin(2014)]{tammelin2014solving}
O.~Tammelin.
\newblock Solving large imperfect information games using cfr+.
\newblock \emph{arXiv preprint arXiv:1407.5042}, 2014.

\bibitem[Von~Stengel(1996)]{von1996efficient}
B.~Von~Stengel.
\newblock Efficient computation of behavior strategies.
\newblock \emph{Games and Economic Behavior}, 14\penalty0 (2):\penalty0 220--246, 1996.

\bibitem[Waugh and Bagnell(2015)]{waugh2015unified}
K.~Waugh and J.~A. Bagnell.
\newblock A unified view of large-scale zero-sum equilibrium computation.
\newblock In \emph{Workshops at the twenty-ninth AAAI conference on artificial intelligence}, 2015.

\bibitem[Wei et~al.(2020)Wei, Lee, Zhang, and Luo]{wei2020linear}
C.-Y. Wei, C.-W. Lee, M.~Zhang, and H.~Luo.
\newblock Linear last-iterate convergence in constrained saddle-point optimization.
\newblock \emph{arXiv preprint arXiv:2006.09517}, 2020.

\bibitem[Zinkevich et~al.(2007)Zinkevich, Johanson, Bowling, and Piccione]{zinkevich2007regret}
M.~Zinkevich, M.~Johanson, M.~Bowling, and C.~Piccione.
\newblock Regret minimization in games with incomplete information.
\newblock \emph{Advances in neural information processing systems}, 20, 2007.

\end{thebibliography}

\end{document}